\documentclass[%
 preprint,
 amsmath,amssymb,
 aps,
]{revtex4-2}

\usepackage{graphicx}
\usepackage{dcolumn}
\usepackage{bm}
\usepackage{dsfont}

\usepackage{algorithm}
\usepackage[noend]{algpseudocode}
\usepackage{amsthm}
\newcommand{\pluseq}{\mathrel{+}=}

\newtheorem{theorem}{Theorem}
\newtheorem{lemma}{Lemma}
\usepackage{enumitem}

\usepackage{tikz}
\usetikzlibrary{quantikz}
\usepackage{yquant}
\usepackage{braket}

\usepackage{listings}
\usepackage{courier}
\usepackage{float}

\begin{document}


\title{Partial oracles quantum algorithm framework\\
Part I: Analysis of in-place operations}

\author{Fintan Bolton}
\email{fintan.bolton@bradan-quantum.com}
\affiliation{Bradan Quantum, Klugstr. 101, 80637 Munich, Germany}



\date{\today}

\begin{abstract}


The partial oracles framework is a quantum search algorithm that has the potential to exceed the quadratic speedup of Grover's algorithm, up to a theoretical maximum of an exponential speedup.
Until now, however, the framework has lacked an explicit method for constructing the operator that represents the search iteration.
In this paper, we provide the missing construction, for the special case of an oracle function definable using only in-place operations (that is, where the calculated result of the oracle function can be read just from the qubits in the search index).
The restriction to in-place operations means that the current work does not yet exhibit quantum advantage: oracle functions constructed using only in-place operations are always classically reversible.
To demonstrate quantum advantage, it will be necessary to extend this construction method to include out-of-place operations (part II).
As part of the construction of the search iteration operator, we define a new type of transform, the \textit{reciprocal transform}, which is applied to the oracle function.
We show that the reciprocal transform obeys a chain rule, which makes it possible to break down complex transforms into simple steps.
To illustrate the practical application of this search method, we apply the reciprocal transform to elementary operations from the SHA-256 hash algorithm: addition modulo $2^n$, the $Maj(a, b, c)$ function, the $Ch(a, b, c)$ function, and the bit shift functions.
We also introduce the QFrame python library, which is used to automate the construction of quantum circuits that represent reciprocal transforms.

\end{abstract}

\maketitle


\section{Introduction}

The number of fundamental quantum algorithms we have at our disposal is relatively small and many of these are tied to solving a specific problem (for example, Shor's algorithm for prime number factorization).
Among these algorithms, however, Grover's search algorithm stands out for its unmatched flexibility, and many applications of quantum computing rely on it as a fundamental building block to deliver quantum advantage.
But the quadratic computational speedup offered by Grover's algorithm (locating a solution in a search space of size $N$ using only $\mathcal{O}(\sqrt{N})$ oracle queries) might not actually be of practical use in the real world.

In a recent paper \cite{hoefler2023disentanglinghypepracticalityrealistically}, Hoefler, Häner and Troyer made a performance comparison between a contemporary GPU processor (state-of-the art as of 2023) and a highly optimistic future projection of a quantum processor with 10,000 error-corrected qubits.
Taking a two-week computation time as the cutoff to define a limit on the size of the problem considered, they found that a quadratic speedup was of no practical use (at the threshhold of quantum advantage, an oracle function would be limited to performing less than 0.2 of a single 16-bit floating point operation).
Stoudenmire and Waintal \cite{Stoudenmire2023GroversAO,PhysRevX.14.041029} have also made a detailed study of the practicality of Grover's algorithm and come to similar conclusions.

Both of these studies underscore the importance of developing quantum algorithms that can deliver an exponential speedup, as this is the class of algorithms that can be relied upon to deliver a decisive quantum advantage.

The partial oracles algorithm (still under development) is a line of research that was devised specifically to tackle the limitations of Grover search.
In a previous paper \cite{PartialOraclesArticle2024}, we demonstrated that the theoretical upper limit (not necessarily achievable) on the speedup provided by the partial oracles approach is exponential.
This does not contradict the well-known result that Grover's search is optimal \cite{Galindo2000FamilyOG}, because the partial oracles approach starts from a different premise: instead of a single-bit oracle function $f(x):\{0,1\}^n\rightarrow\{0,1\}$ which returns $0$ for the solution value $x_s$ (zero-matching convention), the partial oracles algorithm employs a multi-bit oracle function $f(x):\{0,1\}^n\rightarrow\{0,1\}^n$ which indicates a full match only if every returned bit is zero----that is, $f_0(x)=0, f_1(x)=0,\ldots,f_{n-1}(x)=0$ indicates the solution has been found (assuming a zero-matching convention).

The particular difficulty faced by the partial oracles approach, however, is that the the second Grover operator can no longer be used to perform intermediate iterations of the search and a new type of operator is required.
Consider how the Grover operators are conventionally defined.
Given an $n$-qubit index register $\ket{x}=\ket{x_{n-1}\ldots x_0}$, we define the initial state $\ket{S}=(1/\sqrt{2^n})\sum_x \ket{x}$ and the sought-after solution state $\ket{x_s}$ that satisfies $f(x_s)=0$.
We can then define the \textit{first Grover operator} $G^I=\mathbf{I} - 2\ket{x_s}\bra{x_s}$ as a reflection in the hyperplane perpendicular to $\ket{x_s}$; and the \textit{second Grover operator} $G^{II}=\mathbf{I} - 2\ket{S}\bra{S}$ as a reflection in the hyperplane perpendicular to the initial state $\ket{S}$.
A single Grover iteration is then defined as the product $G^{II}G^I$ of these two reflection operators:

\begin{equation}
- \big( \mathbf{I} - 2\ket{S}\bra{S} \big)\,\, \big( \mathbf{I} - 2\ket{x_s}\bra{x_s} \big)
\end{equation}

When we consider that the initial state $\ket{S}$ can be written as $H^{\otimes n}\,\ket{0}$ (applying a Walsh-Hadamard transformation to the $n$-qubit zero state) and the fact that $H^{\otimes n}$ is its own inverse, we can also write the Grover iteration in the form:

\begin{equation}
- \underbrace{H^{\otimes n}\,\big( \mathbf{I} - 2\ket{0}\bra{0} \big)\, H^{\otimes n}}_{\text{2\textsuperscript{nd} Grover operator}}
\,\,
\underbrace{\big( \mathbf{I} - 2\ket{x_s}\bra{x_s} \big)}_{\textrm{1\textsuperscript{st} Grover operator}}
\end{equation}

From this perspective, we can break the 2\textsuperscript{nd} Grover operator into three parts:

\begin{enumerate}
\item The Walsh-Hadamard transformation $H^{\otimes n}$ maps states from direct space $\ket{x}$ to reciprocal space $\ket{k}$, where the basis vectors $\ket{k}$ represent the Walsh functions $\{W_k(x)\}$ for $k=0\ldots (2^n-1)$.

\item In reciprocal space, the operator performs a reflection in the hyperplane perpendicular to the $\ket{k=0}$ reciprocal state (which represents the $W_0(x)$ Walsh function).

\item The Walsh-Hadamard transformation $H^{\otimes n}$ maps states from reciprocal space $\ket{k}$ back to direct space $\ket{x}$.
\end{enumerate}

As already noted, we have found that the second Grover operator cannot be used in the context of the partial oracles approach.
But it turns out that it is possible to define an operator in reciprocal space (sandwiched between the two $H^{\otimes n}$ operators) that implements the appropriate type of search iteration for the partial oracles approach.
The requisite operator is constructed by performing what we call a \textit{reciprocal transform} $R[f^\sigma (x^\mu)]$ of the oracle function (section \ref{reciprocaltransformdefn}).
The corresponding partial oracle iteration operator is then given by equation \ref{eq:partialoracleiteration-sequential} (or in parallelized form, by equation \ref{eq:partialoracleiteration-single}).
This is the key result of this paper.

Note, however, that this result has so far only been proved for in-place operations, not out-of-place operations.
Given an $n$-qubit index register $x$, the distinction between in-place operations and out-of-place operations can be described, as follows:

\begin{itemize}
\item \textit{In-place operations}---the oracle function can be computed in such a way that it is only necessary to test qubits $x_{n-1},\ldots,x_0$ from the index register $x$ when applying a phase marker.

\item \textit{Out-of-place operations}---the oracle function must be computed using out-of-place qubits (that is, not part of the index $x$) in such a way that it is necessary to test out-of-place qubits when applying a phase marker.
\end{itemize}

The difference between these two cases is stark: an oracle function calculated using in-place operations is always classically reversible; whereas reversing an oracle function calculated using out-of-place operations can be very non-trivial and hard to compute.

A typical example of an out-of-place operation is integer multiplication.
Given two index registers $x$ and $y$ (with $x$ spanning the qubits $x_{\ell-1}\ldots x_0$, and $y$ spanning the qubits $y_{m-1}\ldots y_0$), the product of $x$ and $y$ \textbf{must} be calculated in an out-of-place register $z$, as there is no algorithm for in-place multiplication.
Consequently, solving the factorization problem requires support for out-of-place operations.

So far, the partial oracles algorithm has been worked out only for in-place operations and therefore does not (yet) demonstrate quantum advantage.
The next step in the development of the partial oracles algorithm is the analysis of out-of-place operations, which will be the subject of a future paper (part II).

\section{Motivation}

The ancestor of all quantum search algorithms is Grover's famous algorithm \cite{LKGrover1996,PhysRevLett.79.325}, which also serves as the starting point for the partial oracles algorithm.
In the usual formulation, we are given an index register $x$, which consists of $n$ qubits $x=x_{n-1}x_{n-1}\ldots x_1 x_0$ and is capable of representing $N=2^n$ distinct states $x=\{0,1\}^n$ (which represent the search space).
We are also given an oracle function $f(x)$ that identifies the sought-after solution state $x_s$.
The usual convention is for the oracle function to return $f(x_s)=1$ for the solution state $x_s$ and $f(x)=0$ for all other states $x\neq x_s$---but for the purpose of this work, we find it more useful to adopt the convention that $f(x_s)=0$ for the solution state $x_s$ and $f(x)=1$ for all other states (zero matching convention).

Grover's algorithm consists of the following steps:

\begin{enumerate}
\item The quantum system is initialized in an equal superposition state $\ket{\Psi} = (1/N^{1/2})\,\sum_x\,\ket{x}$, where each $\ket{x}$ state has the amplitude $+1/\sqrt{N}$.

\item The first Grover operator uses the oracle function $f(x)$ to identify the solution state $\ket{x_s}$ and marks it by applying the phase $\exp(i\pi)=-1$.

\item A Walsh-Hadamard transformation is applied to the state $\ket{\Psi}\mapsto\,H^{\otimes n}\,\ket{\Psi}$, which transforms the state into reciprocal space.
In reciprocal space, the basis states are given by the Walsh functions $W_k(x)=\exp(i\,2\pi\,k\cdot x)$, where we define $k\cdot x= (1/2)\sum_{j=1}^{n-1}k_j x_j$. Most of the state's amplitude is concentrated in the zero Walsh state $W_0(x) = \textrm{const}$, while a small part of the amplitude belongs to the negative amplitude at $x_s$, representing the solution state.

\item The second Grover operator flips the zero Walsh state $W_0(x)$ (for example, using a multi-controlled gate to detect the all-zero state, and an ancillary phase oracle to flip the phase), so that its amplitude is negative and the components it represents are oriented parallel to the negative amplitude at $x_s$.

\item The (inverse) Walsh-Hadamard transformation is applied to the state, transforming from reciprocal space back to direct space.
After this transformation, the amplitude from the zero Walsh function and the solution amplitude add constructively at $x_s$, so that the ampltitude at $x_s$ increases from $1/\sqrt{N}$ to approximately $2/\sqrt{N}$.

\item By repeating this procedure iteratively, the amplitude at $x_s$ increases---at first linearly and then more slowly---until all of the amplitude is in the $\ket{x_s}$ state, after approximately $\sqrt{N}$ iterations.
\end{enumerate}

In the partial oracle approach \cite{PartialOraclesArticle2024}, the single-bit oracle function $f(x):\{0,1\}^n\rightarrow\{0,1\}$ is replaced by a multi-bit oracle function $f(x):\{0,1\}^n\rightarrow\{0,1\}^n$.
Instead of returning just one bit (indicating either match or no-match), the partial oracle function returns multiple bits $\{f_j(x)\}_{j=0}^{n-1}$, where each returned bit $f_j$ indicates a \textit{partial match}.
When all of the returned bits signal a match---that is, when simultaneously $f_0(x)=0, f_1(x)=0,\ldots,f_{n-1}(x)=0$ (assuming a zero-matching convention)---this indicates that the solution value $x_s$ has been found.

\begin{figure}[h]
\centering
\includegraphics[width=\linewidth]{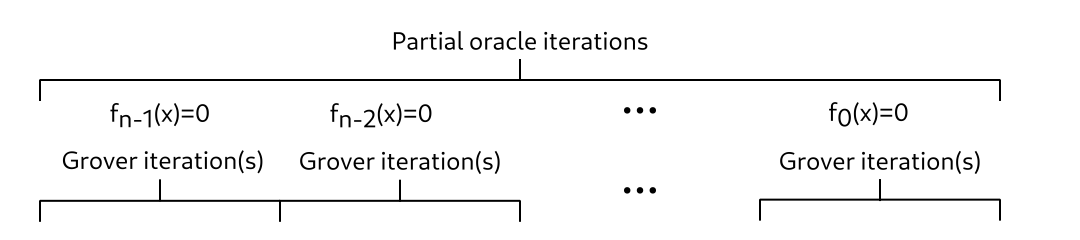}
\caption{Overview of partial oracle iterations. At each iteration stage $j$, a search is executed to return the index values $x$ that satisfy the corresponding partial oracle condition $f_j(x)=0$, and the effect of these searches is cumulative, so that at the end of the algorithm the remaining index state (or states) satisfies all of the partial oracle conditions, $\{f_j(x)=0\}_{j=0}^{n-1}$. Each of these $n$ searches is implemented by a (modified) Grover-Long iteration, which is iterated until the $f_j(x)=0$ condition is satisfied. In the simplest case, when the condition $f_j(x)=0$ is satisfied by 50\% of index states, the modified Grover-Long search requires only one iteration.}
\label{fig:partialoracleiterations}
\end{figure}

As shown in figure Figure \ref{fig:partialoracleiterations}, the partial oracles algorithm is built on two layers of iteration, with Grover's algorithm nested inside the partial oracle iterations.
In the first partial oracle iteration, we try to find the subset of index values $x$ that satisfy the condition $f_{n-1}(x)=0$.
For this, we employ Grover's algorithm, iterating until the condition $f_{n-1}(x)=0$ is satisfied for all $x$.
We then move on to the next partial oracle iteration, which requires both the conditions $f_{n-1}(x)=0$ and $f_{n-2}(x)=0$ (or equivalently, $f_{n-1}(x)\vee f_{n-2}(x)=0$, where $\vee$ is logical OR).
We continue in this way across all of the partial oracle iterations until, at the last iteration, we find the solution value $x_s$ that satisfies all of the partial conditions $f_{n-1}(x)\vee\cdots\vee f_0(x)=0$.
This is the general scheme of the algorithm, but there are significant difficulties that need to be addressed: from the second partial oracle iteration onward, it becomes impossible to apply the conventional Grover operator and a new kind of operator is required.

Consider the special case of a partial oracle function $f(x):\{0,1\}^n\rightarrow\{0,1\}^n$ that is bijective, mapping each index value $x=x_{n-1}\ldots x_0$ to a unique multi-bit flag value $f=f_{n-1}\ldots f_0$.
When $f(x)$ is bijective, it can easily be shown that each flag bit $f_j$ divides the set of index values $x$ into two equal-sized subsets: that is, the set of all $x$ satisfying $f_j(x)=0$ has size $N/2$ and is disjoint from the set of all $x$ satisfying $f_j(x)=1$, which also has size $N/2$.
This implies that, for the first partial oracle iteration, we start with a set of $N$ index values and finish with an (intermediate) solution set of $N/2$ index values satisfying $f_{n-1}(x)=0$.
For this iteration, which requires a small number of Grover iterations (exactly one iteration), it is better to use the Grover-Long variant of the algorithm \cite{Long2001GroverAW}, which steers the Grover rotation more precisely and more quickly to its target.

\begin{figure}[h]
\centering
\includegraphics[width=\linewidth*4/10]{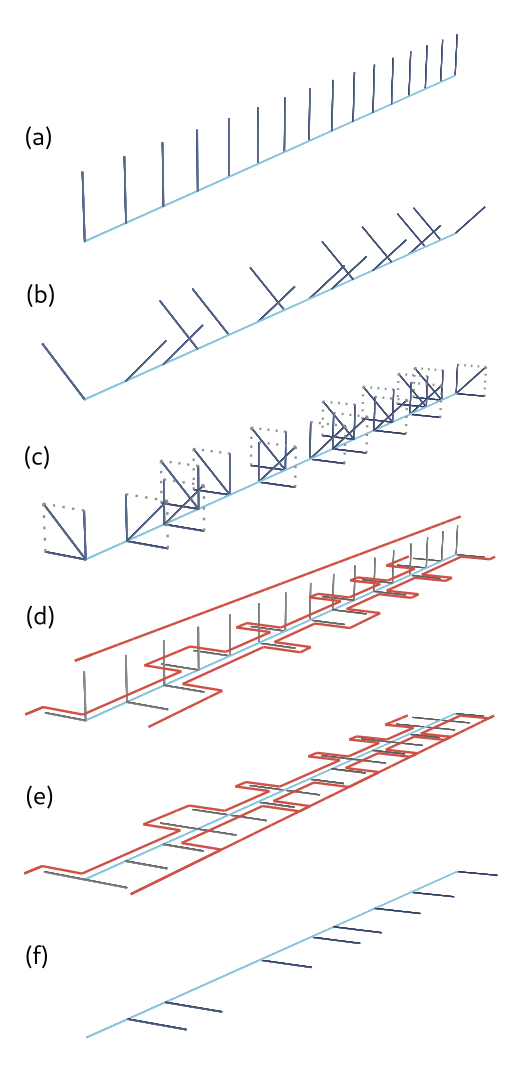}
\caption{First partial oracle iteration showing: (a) the initial equal superposition state; (b) index states after applying a relative phase of $\exp(i\pi/2)$ between matching and non-matching states; (c) horizontal and vertical components of index states; (d) visualization of states in reciprocal space, after Walsh-Hadamard transform; (e) after applying the phase $\exp(i\pi/2)$ to the $W_0(x)$ Walsh state; (f) after transforming from reciprocal space back to direct space.}
\label{fig:firstpartialoracleiteration}
\end{figure}

Figure \ref{fig:firstpartialoracleiteration} shows a visualization of the Grover-Long algorithm, implementing the first partial oracle iteration (which targets the condition $f_{n-1}(x)=0$):

\begin{enumerate}[label=(\alph*)]
\item The quantum system is initialized in an equal superposition state $\ket{\Psi} = (1/N^{1/2})\,\sum_x\,\ket{x}$, where each $\ket{x}$ state has the amplitude $+1/\sqrt{N}$.

\item The Grover-Long algorithm uses the partial oracle $f_{n-1}(x)$ to identify the index states $\ket{x}$ that match the condition $f_{n-1}(x)=0$, applying a relative phase of $\exp(i\,\pi/2)$ between the matching and non-matching states.
In Figure \ref{fig:firstpartialoracleiteration} (b), this is visualized by rotating the non-matching states anti-clockwise through the angle $\pi/4$ and the matching states clockwise through the angle $\pi/4$.

\item If we now consider the magnitude of the component projected onto the vertical axis, we see that each $\ket{x}$ has a vertical component equal to $1/\sqrt{2}$ times its amplitude.
The horizontal component also has a magnitude equal to $1/\sqrt{2}$ times the amplitude, pointing to the left for the non-matching states and to the right for the matching states.

\item We now apply a Walsh-Hadamard transformation to the state, which transforms the state into reciprocal space.
Exactly half of the state's amplitude is assigned to the zero Walsh state $W_0(x)$, which represents the vertical components of all the index states $\ket{x}$.
The rest of the amplitude is shared across a complex superposition of Walsh functions, representing the series of spikes that point left and right along the horizontal axis.

\item The second Grover-Long operator applies the phase $\exp(i\,\pi/2)$ to the zero Walsh state $W_0(x)$, so that its amplitude is oriented parallel to the horizontal components of the matching $\ket{x}$ states and anti-parallel to the non-matching $\ket{x}$ states.

\item We now invert the Walsh-Hadamard transformation, transforming from reciprocal space back to direct space.
After this transformation, the amplitude from the zero Walsh function and the oracle-matching horizontal components add constructively (doubling their amplitudes from $1/\sqrt{2}$ to $2/\sqrt{2}$); while the amplitude from the zero Walsh function and the non-oracle-matching horizontal components add destructively (reducing their amplitudes to zero).
\end{enumerate}

The second partial oracle iteration (targeting $f_{n-1}(x)=0$ and $f_{n-2}(x)=0$) reveals a new difficulty, however.
When you consider the state of the system after the first partial oracle iteration, the non-matching states that have just been eliminated are typically removed from quasi-random locations along the continuum of $\ket{x}$ states.
We can anticipate that, after transforming to reciprocal space, this state yields a quasi-random superposition of Walsh functions $\sum_k c_k\,W_k$ that cannot be solved using the Grover-Long algorithm.

However, there is a way around this problem.
It turns out that, after applying the Walsh-Hadamard transformation, it is possible to reorder the Walsh functions in reciprocal space, yielding a compact representation of the vertical components of the index states, which can then be addressed by a single reciprocal qubit.
The operator that performs the reordering of Walsh functions in reciprocal space is \textit{the reciprocal transform of the oracle function} (see section \ref{reciprocaltransformdefn}).
This represents the key result of this paper.

\begin{figure}
\centering
\includegraphics[width=\linewidth*4/10]{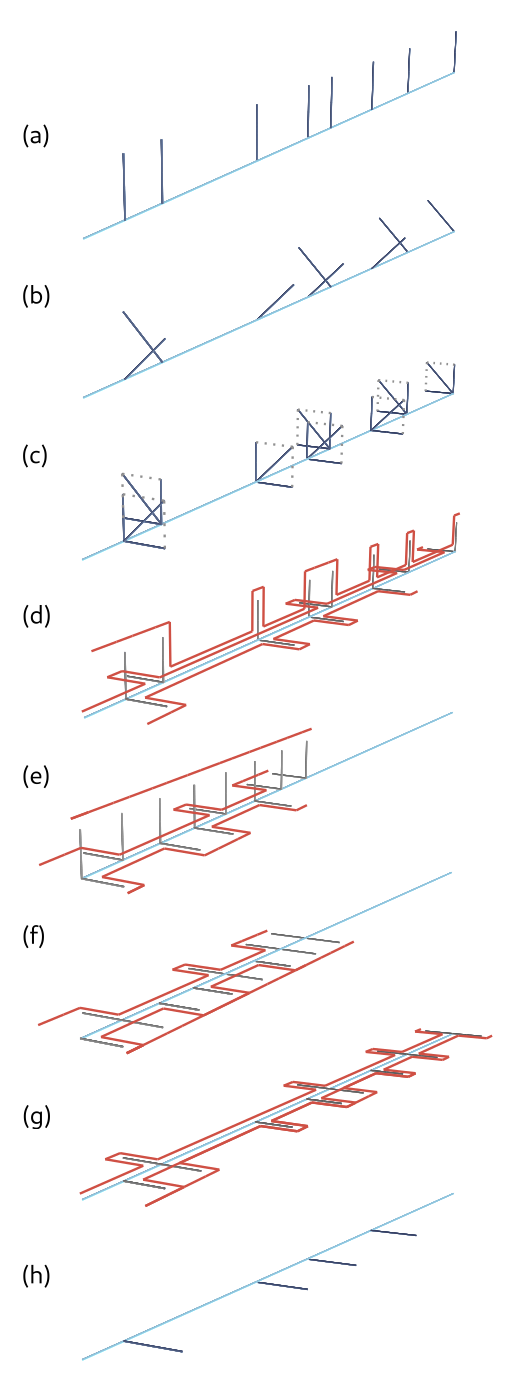}
\caption{Second (and subsequent) partial oracle iteration: (a) index states remaining after the first partial oracle iteration; (b) index states after applying a relative phase of $\exp(i\pi/2)$ between matching and non-matching states; (c) horizontal and vertical components of index states; (d) visualization of states in reciprocal space, after Walsh-Hadamard transform; (e) after being rearranged by the reciprocal transform; (f) after applying the phase $\exp(i\pi/2)$ to the $W_0(x)$ Walsh state; (g) after being rearranged by the adjoint of the reciprocal transform; (h) after transforming from reciprocal space back to direct space.}
\label{fig:secondpartialoracleiteration}
\end{figure}

Figure \ref{fig:secondpartialoracleiteration} shows a visualization of the modified Grover-Long algorithm, implementing the second partial oracle iteration and incorporating additional steps using the reciprocal transform of the oracle function:

\begin{enumerate}[label=(\alph*)]
\item The initial state of the quantum system (after the first partial oracle iteration) is a quasi-randum distribution of states, where 50\% of the states are missing from the continuum of $\ket{x}$ states.

\item The Grover-Long algorithm uses the partial oracle $f_{n-2}(x)$ to identify the index states $\ket{x}$ that match the condition $f_{n-2}(x)=0$, applying a relative phase of $\exp(i\,\pi/2)$ between the matching and non-matching states.
In (b), this is visualized by rotating the non-matching states anti-clockwise through the angle $\pi/4$ and the matching states clockwise through the angle $\pi/4$.

\item In this second partial oracle iteration, the components of $\ket{x}$ projected onto the vertical axis are just as sparse and randomly distributed as the components projected onto the horizontal axis.

\item We apply a Walsh-Hadamard transformation to the state, which transforms the state into reciprocal space.
In this case, both the vertical components and the horizontal components are represented by a complex (quasi-random) superposition of Walsh functions $\{W_k\}$.

\item We apply the operator $R[f(x)]$, which represents the reciprocal transform of the oracle function (see section \ref{reciprocaltransformdefn}).
The effect of this operator is to reorder the Walsh functions in reciprocal space, giving a compact representation where 50\% of the amplitude is assigned to the zero Walsh state $W_0(x)$ (representing the vertical components of all the index states).

\item We apply the phase $\exp(i\,\pi/2)$ to the zero Walsh state $W_0(x)$, so that its amplitude is oriented parallel to the horizontal components of the matching $\ket{x}$ states and anti-parallel to the non-matching $\ket{x}$ states.

\item We apply the operator $R^\dagger[f(x)]$, which represents the \textit{inverse} reciprocal transform of the oracle function (see section \ref{reciprocaltransformdefn}).
The effect of this operator is to undo the reordering of the Walsh functions in reciprocal space.

\item We invert the Walsh-Hadamard transformation, transforming from reciprocal space back to direct space.
After this transformation, the amplitudes from the rotated components and the oracle-matching horizontal components add constructively (doubling their amplitudes from $1/\sqrt{2}$ to $2/\sqrt{2}$); while the amplitudes from the rotated components and the non-oracle-matching horizontal components add destructively (reducing their amplitudes to zero).

\end{enumerate}

Continuing in this way for the third and subsequent partial oracle iterations, we reduce the number of states by $1/2$ at each iteration, progressively reducing the number of basis states $\ket{x}$ from $2^n\mapsto 2^{n-1}\mapsto\cdots\mapsto 2\mapsto 1$ until we obtain the solution state $\ket{x_s}$ after $n$ iterations.

\section{Partial oracles algorithm framework}

\subsection{One-to-one oracle function} \label{onetooneoracle}

The partial oracle framework works with a multi-qubit oracle function. That is, given an $n$-qubit index register $x\in\{0,1\}^n$, the oracle function $f(x):\{0, 1\}^n\rightarrow\{0, 1\}^n$ returns the $n$-qubit value $f(x)=(f_0(x), \ldots, f_{n-1}(x))$.

We use a subscript to reference the qubits in a register. The $x$ index register therefore has the qubits $x_0,\ldots,x_{n-1}$ and the $f(x)$ oracle function returns the qubits $f_0,\ldots,f_{n-1}$. We use the notation $f_j(x)$ to denote the \textit{partial oracle function} that returns the $j$th qubit of $f(x)$.

There are two alternative conventions we can use to define the search match condition:

\begin{itemize}
\item \textit{Match on all zeros}---where an index value $x$ is defined to belong to the solution set, if $f_j(x)=0$ for $j=0,\ldots,n-1$. That is, we say $x$ is a matching index when $f(x)$ satisfies: $|f_{n-1}\ldots f_0\rangle = |0\ldots0\rangle$.

\item \textit{Match on all ones}---where an index value $x$ is defined to belong to the solution set, if $f_j(x)=1$ for $j=0,\ldots,n-1$. That is, we say $x$ is a matching index when $f(x)$ satisfies: $|f_{n-1}\ldots f_0\rangle = |1\ldots1\rangle$.

\end{itemize}

Both conventions are potentially useful, depending on the context. For the examples in this paper, we stick to the \textit{match on all zeros} convention.

For the partial oracle framework described in this paper we require that the oracle function $f(x)$ is defined explicitly as a one-to-one function. That is, for each $n$-qubit index $x$, the oracle function returns a unique $n$-qubit value. In practice, however, you will often find that a function you want to use as an oracle does not have an explicitly one-to-one form. For example, consider the majority function $Maj(a, b, c)$, which is used in the definition of SHA algorithms from the secure hash standard. The majority function takes three bits $a, b, c$ as its arguments and is defined as follows (where $\wedge$ denotes logical AND, and $\oplus$ denotes logical XOR):

\begin{equation}
Maj(a, b, c) = (a\wedge b) \oplus (b\wedge c) \oplus (c\wedge a)
\end{equation}

This function returns $0$ if the majority of its inputs $a, b, c$ are $0$ and returns $1$ if the majority of its inputs are $1$. Consider how to define an oracle function that selects for all of the triplets $(a, b, c)$ satisfying $Maj(a, b, c)=1$. We could define this oracle function as follows (matching on zero convention):

\begin{equation}
f(a, b, c) = Maj(a, b, c)\oplus 1
\end{equation}

The problem with this definition, however, is that it maps 3 bits to 1 bit, and is therefore not a one-to-one (bijective) function. This can be fixed by \textit{completing} the oracle function, expanding it to a multi-bit function that returns three bits instead of one (giving a mapping of the form $f(x):\{0, 1\}^3\rightarrow\{0, 1\}^3$). A complete oracle function for the majority function can be defined as follows:

\begin{align}
f_0 &= Maj(a, b, c)\oplus 1 \\
f_1 &= a\oplus c \\
f_2 &= a\oplus b
\end{align}

Where it can easily be verified that the resulting mapping $(a, b, c)\mapsto (f_0, f_1, f_2)$ is one-to-one. That is, for every triplet $(a, b, c)$ there is a unique corresponding triplet $(f_0, f_1, f_2)$.

Note that bitwise one-to-one oracle functions have an important property when it comes to matching conditions: \textit{each matching partial oracle bit divides the index values into two equally large subsets}. For example, the condition $f_0(x)=0$ is satisified by exactly half of the index values $x$ and the condition $f_0(x)=1$ is satisified by the other half (giving two disjoint solution sets). Taking the majority function as an example: the condition $f_0=0$ is satisfied by 4 index values, $f_0\vee f_1=0$ is satisified by 2 index values, and $f_0\vee f_1\vee f_2=0$ is satisfied by 1 index value.

\subsection{Notation and definition of qubit shapes}

In the course of developing the partial oracle framework, we find it is necessary to work with multiple qubits and qubit registers at the same time. It is, therefore, helpful to introduce some consistent notation for these objects.

\subsubsection{Registers in direct space}

The raw register $x\in\{0,1\}^n$ refers to the totality of qubits in the search index (but does not include, for example, any ancillary qubits that might be needed by circuit implementations). We then divide the $n$ qubits of the raw register $x$ into $r$ registers of sizes $\{m_j\}_{j=1}^{r}$., where $m_1+m_2+\cdots m_r=n$. To identify each of the $r$ registers, we use a superscript index in parentheses and a subscript to reference individual qubits:

\begin{align*}
1^{\textrm{st}} \> \textrm{register with} \> m_1 \> \textrm{qubits} &= x^{(1)}
= x^{(1)}_{m_1-1} x^{(1)}_{m_1-2} \ldots x^{(1)}_0 \\
\vdots & \\
r^{\textrm{th}} \> \textrm{register with} \> m_r \> \textrm{qubits} &= x^{(r)}
= x^{(r)}_{m_r-1} x^{(1)}_{m_r-2} \ldots x^{(r)}_0 \\
\end{align*}

The collection of registers constitutes a \textit{qubit vector}:

\begin{equation*}
x^\mu = (x^{(1)}, x^{(2)},\ldots,x^{(r)})
\end{equation*}

Where the greek superscript $\mu$ denotes the \textit{qubit shape} of the qubit vector:

\begin{equation*}
\mu = (m_1, m_2,\ldots,m_r)
\end{equation*}

\subsubsection{Registers in reciprocal space}

By analogy to the notation in direct space, we can also define the corresponding register notation in reciprocal space. Given the \textit{reciprocal raw register} $k$, we can define the corresponding \textit{reciprocal qubit vector} $k^\mu$:

\begin{equation*}
k^\mu = (k^{(1)}, k^{(2)},\ldots,k^{(r)})
\end{equation*}

With the corresponding qubit shape:

\begin{equation*}
\mu = (m_1, m_2,\ldots,m_r)
\end{equation*}

\subsubsection{Hadamard-style dot product}

We define the Hadamard-style dot product between a qubit vector $x^\mu$ and a reciprocal qubit vector $k^\mu$ as:

\begin{eqnarray}
k^\mu \cdot x^\mu &=& \left( \frac{1}{2}\bigoplus_{\ell_1=0}^{m_1-1} k^{(1)}_{\ell_1}x^{(1)}_{\ell_1} \right) \oplus \cdots \oplus \left( \frac{1}{2}\bigoplus_{\ell_r=0}^{m_r-1} k^{(r)}_{\ell_r}x^{(r)}_{\ell_r} \right) \\
 &=& \frac{1}{2} \bigoplus_{j=1}^{r} \bigoplus_{\ell_j=1}^{r} k^{(j)}_{\ell_j} x^{(j)}_{\ell_j} \\
 &=& \frac{1}{2} \bigoplus_{i=0}^{n-1} k_{i} x_{i} \label{eq:hadamarddotproduct}
\end{eqnarray}

Equation \ref{eq:hadamarddotproduct} uses the indices of the underlying raw vectors, which span all of the qubits from registers $1\ldots r$, combining the qubit indices into a single range from $0\ldots n-1$. The Hadamard dot product is useful when using the Walsh-Hadamard transformation for the transformation to reciprocal space.

\subsubsection{Hadamard transform with qubit shapes}

With the help of the dot product notation from the previous section, we can write the Hadamard transform for qubit vectors as:

\begin{equation}
H_{k^\mu x^\mu} = \frac{1}{\sqrt{N_\mu}} \sum_{k^\mu} \sum_{x^\mu} e^{-i2\pi\,k^\mu\cdot x^\mu}\,\,|k^\mu\rangle\langle x^\mu|
\end{equation}

Where $N_\mu=2^{m_1}\ldots 2^{m_r}=2^n$ and the sums over $x^\mu$ and $k^\mu$ use the following compact notation:

\begin{eqnarray}
\sum_{x^\mu} &=& \sum_{x^{(1)}=0}^{2^{m_1}-1}\cdots\sum_{x^{(r)}=0}^{2^{m_r}-1} \\ 
\sum_{k^\mu} &=& \sum_{k^{(1)}=0}^{2^{m_1}-1}\cdots\sum_{k^{(r)}=0}^{2^{m_r}-1}
\end{eqnarray}

The Dirac kets for the qubit vectors are defined in terms of the qubit register kets, as follows:

\begin{eqnarray}
|x^\mu\rangle &=& |x^{(1)}\rangle\ldots |x^{(r)}\rangle \\
|k^\mu\rangle &=& |k^{(1)}\rangle\ldots |k^{(r)}\rangle
\end{eqnarray}

\subsubsection{Delta function}

Note that, in the usual way, we can define a delta function by summing over the complete Walsh-Hadamard basis, that is:

\begin{equation}
\delta_{x^\mu, 0} = \frac{1}{N_\mu} \sum_{k^\mu} e^{-i2\pi\,k^\mu\cdot x^\mu}
\label{eq:deltafunction}
\end{equation}

This can easily be proved for the case of the Hadamard-style dot product, as follows:

\begin{eqnarray*}
\frac{1}{N_\mu} \sum_{k^\mu} e^{-i2\pi\,k^\mu\cdot x^\mu}
 &=& \left( \frac{1}{2}\sum_{k_{n-1}=0}^1 e^{-i\pi\,k_{n-1}x_{n-1}} \right)\cdots
     \left( \frac{1}{2}\sum_{k_0=0}^1 e^{-i\pi\,k_0 x_0} \right) \\
 &=& \left( \delta_{x_{n-1}, 0} \right)\cdots\left( \delta_{x_0, 0} \right) \\
 &=& \delta_{x^\mu, 0}
\end{eqnarray*}

\subsection{Reciprocal transform definition}
\label{reciprocaltransformdefn}

We are now ready to define the reciprocal transform, which plays a central role in the partial oracles algorithm. Given the $n$-dimensional one-to-one oracle function $f^\sigma(x^\mu):\{0,1\}^n\rightarrow\{0,1\}^n$, which maps qubit vector $x^\mu$ with shape $\mu$ to qubit vector $f^\sigma(x^\mu)$ with shape $\sigma$, we define the \textit{reciprocal transform} $R[f^\sigma (x^\mu)]$ of the oracle function to be:

\begin{equation}
R[f^\sigma (x^\mu)] = \frac{1}{\sqrt{N_\sigma}}\frac{1}{\sqrt{N_\mu}}
    \sum_{\kappa^\sigma}\sum_{x^\mu}\sum_{k^\mu}
    e^{i2\pi\,(-\kappa^\sigma\cdot f^\sigma(x^\mu) + x^\mu\cdot k^\mu )}
    |\kappa^\sigma\rangle\langle k^\mu|
    \label{eq:reciprocaltransform}
\end{equation}

Where the \textit{reciprocal operator} $R[f^\sigma (x^\mu)]$ is a unitary operator that acts on the $n$ qubits $|k^\mu\rangle$, returning the $n$ transformed qubits $|\kappa^\sigma\rangle$. $N_\sigma$ is the number of states in the $\sigma$ shape and $N_\mu$ is the number of states in the $\mu$ shape (which should both be equal to $2^n$ in practice). Note that $x^\mu$ is effectively a dummy variable in the above expression: after performing the sum over $x^\mu$ on the right-hand side of equation \ref{eq:reciprocaltransform}, the variable $x^\mu$ disappears from the expression.
The reciprocal operator $R[f^\sigma (x^\mu)]$ is effectively a representation of the oracle function in reciprocal space (that is, at a point in the algorithm after performing a Walsh-Hadamard transformation or a Fourier transformation on the index qubits).

Closely related to this, we can also define a bare form of the reciprocal operator $B[f^\sigma (x^\mu)]$, which goes straight from direct space $\ket{x^\mu}$ to reciprocal space $\ket{\kappa^\sigma}$, as follows:

\begin{equation}
B[f^\sigma (x^\mu)] = \frac{1}{\sqrt{N_\sigma}}
    \sum_{\kappa^\sigma}\sum_{x^\mu}
    e^{i2\pi\,(-\kappa^\sigma\cdot f^\sigma(x^\mu))}
    |\kappa^\sigma\rangle\langle x^\mu|
    \label{eq:baretransform}
\end{equation}

The reciprocal operator can be recovered from this form by preceding it with a Walsh-Hadamard transformation---that is, $R[f^\sigma (x^\mu)]=B[f^\sigma (x^\mu)]\,H^{\otimes n}$.
We make use of the bare form $B$ in the proof of the partial oracles algorithm (section \ref{partialoraclealgorithm}).
But for practical calculations, the reciprocal operator $R[f^\sigma (x^\mu)]$ is far more important, as it is straightforward to calculate it (by summing over the dummy variable $x^\mu$), whereas there is no obvious procedure for calculating $B[f^\sigma (x^\mu)]$.

\subsection{Partial oracles algorithm}
\label{partialoraclealgorithm}

Given an $n$-qubit index register $x\in\{0,1\}^n$ and a bijective partial oracle function $f(x):\{0,1\}^n\rightarrow\{0,1\}^n$, the goal of the partial oracles algorithm is to find the target index value $x_t$ that satisfies all of the partial oracle conditions: $f_0(x_t)=0,\ldots, f_{n-1}(x_t)=0$ (following the zero matching convention).

The problem can be solved by starting with a uniform superposition state:

\begin{equation}
\ket{S} = H^{\otimes n}\,\ket{0} = \frac{1}{\sqrt{2^n}}\,\sum_x\,\ket{x}
\end{equation}

And then successively applying the \textit{partial oracle iteration} for each $\ell=0,\ldots,n-1$:

\begin{equation}
H^{\otimes n}\,R^\dagger[f(x)]\,e^{+i\frac{\pi}{2}\kappa_\ell}\,R[f(x)]\,H^{\otimes n}\,e^{+i\frac{\pi}{2} f_\ell(x)}
\label{eq:partialoracleiteration-sequential}
\end{equation}

Where the $\exp(+i\frac{\pi}{2}\kappa_\ell)$ operator references the $\kappa_\ell$ qubit in the state returned by the reciprocal operator $R[f(x)]$ (see equation \ref{eq:reciprocaltransform}).

We now proceed to show how the partial oracle iterations reduce the uniform superposition state to the desired target state by a process of induction.
Given the set of index states $x\in T_\ell$ that already satisfy the first $\ell$ partial oracle conditions $f_0(x)=0,\ldots,f_{\ell-1}(x)=0$ we define the following superposition state:

\begin{equation}
\ket{T_\ell} = \frac{1}{\sqrt{|T_\ell|}}\,\sum_{x\in T_\ell}\,\ket{x}
\end{equation}

Where we note that $|T_\ell|=2^{n-\ell}$.

We now show how, by applying the $\ell+1$ partial oracle iteration, we get the next state $T_{\ell+1}$ in the sequence, which satisfies $\ell+1$ partial oracle conditions: $f_0(x)=0,\ldots,f_\ell(x)=0$.

Define the indices $\{y_j\}_0^{n-1}$ such that $y_0=f_0(x),\ldots,y_{n-1}=f_{n-1}(x)$. Recalling that $f(x)$ is an explicitly bijective function (and thus invertible), we can write a given index $x$ value in terms of the $\{y_j\}$ indices as: $x=f^{-1}(y_{n-1}\ldots y_0)$.
Moreover, for all $x\in T_\ell$, we have $y_0=0,\ldots,y_{\ell-1}=0$, because all $x\in T_\ell$ satisfy the first $\ell$ partial oracle conditions.
Thus we can write the initial state as:

\begin{equation}
\ket{T_\ell} = \frac{1}{\sqrt{2^{n-\ell}}}\,\sum_{y_\ell,\ldots,y_{n-1}}\,\ket{f^{-1}(y_{n-1}\ldots y_\ell\, 0\ldots 0)}
\end{equation}

Now we apply the first operator from the partial oracle iteration:

\begin{equation}
e^{+i\frac{\pi}{2}\,f_\ell(x)}\,\ket{T_\ell}
  = \frac{1}{\sqrt{2^{n-\ell}}}\,
  \sum_{y_\ell,\ldots,y_{n-1}}\,e^{+i\frac{\pi}{2}\,y_\ell}
  \ket{f^{-1}(y_{n-1}\ldots y_\ell\, 0\ldots 0)}
\end{equation}

Now we apply the operator combination $R[f(x)]\,H^{\otimes n}$---noting from \ref{eq:baretransform} that we have $R[f(x)]\,H^{\otimes n}=B[f(x)]\,H^{\otimes n}\,H^{\otimes n}=B[f(x)]$ (as the Hadamard operator $H$ is its own inverse)---to give:

\begin{align*}
R[f(x)] & \,H^{\otimes n}\,e^{+i\frac{\pi}{2}\,f_\ell(x)}\,\ket{T_\ell} \\
  &= \left(
    \frac{1}{\sqrt{2^n}}
    \sum_\kappa \sum_x \, e^{-i2\pi\,\kappa\cdot f(x)}
    \ket{\kappa}\bra{x}
  \right)
  \left(
    \frac{1}{\sqrt{2^{n-\ell}}}\,
    \sum_{y_\ell,\ldots,y_{n-1}}\,e^{+i\frac{\pi}{2}\,y_\ell}
    \ket{f^{-1}(y_{n-1}\ldots y_\ell\, 0\ldots 0)}
  \right) \\
  &= 
  \frac{1}{\sqrt{2^n}}
  \frac{1}{\sqrt{2^{n-\ell}}}\,
  \sum_\kappa \sum_{y_\ell,\ldots,y_{n-1}} \, e^{-i2\pi\,\kappa\cdot f(x)}
  \ket{\kappa}\bra{f^{-1}(y_{n-1}\ldots y_\ell\, 0\ldots 0)}
  \,e^{+i\frac{\pi}{2}\,y_\ell}
  \ket{f^{-1}(y_{n-1}\ldots y_\ell\, 0\ldots 0)} \\
  &= 
  \frac{1}{\sqrt{2^n}}
  \frac{1}{\sqrt{2^{n-\ell}}}\,
  \sum_\kappa \sum_{y_\ell,\ldots,y_{n-1}}\,
  e^{-i\pi\,\sum_{j=\ell}^{n-1}\kappa_j y_j} \,e^{+i\frac{\pi}{2}\,y_\ell} \,\ket{\kappa}
\end{align*}

Given that $\ket{\kappa}=\ket{\kappa_{n-1}}\otimes\ket{\kappa_{n-2}}\otimes\cdots\otimes\ket{\kappa_0}$, we can factor this last expression into a product of $\kappa_j$ qubit states, as follows:

\begin{itemize}
\item For all of the $\kappa_j$ qubits where $j < \ell$, there is no sum over $y_j$, because $y_j=0$ for $j < \ell$. This also means there is no corresponding exponential coefficient $e^{-i\pi\,\kappa_j y_j}$ for the $\kappa_j$ qubit, so when we sum over $\kappa_j$ we get $(\ket{\kappa_j=0} + \ket{\kappa_j=1})$. Putting together all of the qubits for $j < \ell$ gives: $\left(\ket{\kappa_{\ell-1}=0} + \ket{\kappa_{\ell-1}=1}\right)\otimes\cdots\otimes\left(\ket{\kappa_0=0} + \ket{\kappa_0=1}\right)$.

\item The $l^{th}$ qubit $\kappa_\ell$ is a special case, because this qubit has the extra phase term $\exp(+i\frac{\pi}{2}y_\ell)$.
This qubit term can be evaluated as follows:

\begin{align*}
\sum_{\kappa_\ell=0}^1 \sum_{y_\ell=0}^1 \, e^{-i\pi\,\kappa_\ell y_\ell} e^{+i\frac{\pi}{2}\,y_\ell}\, \ket{\kappa_\ell}
  &= (1+i)\,\ket{\kappa_\ell=0} + (1-i)\,\ket{\kappa_\ell=1} \\
  &= 2\, e^{i\frac{\pi}{4}}\,\,\frac{1}{\sqrt{2}} \big( \ket{\kappa_\ell=0} - i \ket{\kappa_\ell=1} \big)
\end{align*}

\item For all of the $\kappa_j$ qubits where $\ell < j$, we can evaluate each of these qubit terms as follows:

\begin{align*}
\sum_{\kappa_\ell=0}^1 \sum_{y_\ell=0}^1 \, e^{-i\pi\,\kappa_\ell y_\ell} \, \ket{\kappa_\ell}
  &= 2\,\sum_{\kappa_\ell=0}^1 \, \delta_{\kappa_\ell, 0} \, \ket{\kappa_\ell} \\
  &= 2\, \ket{\kappa_\ell=0}
\end{align*}
\end{itemize}

Putting these terms together, we get the following expression:

\begin{align*}
R[f(x)] \,H^{\otimes n} & \,e^{+i\frac{\pi}{2}\,f_\ell(x)}\,\ket{T_\ell} \\
  &= e^{i\frac{\pi}{4}}\,
     \ket{\kappa_{n-1}=0}\otimes\cdots\otimes\ket{\kappa_{\ell+1}=0}
     \otimes\frac{1}{\sqrt{2}}\big(\ket{\kappa_\ell=0} - i\ket{\kappa_\ell=1} \big) \\
  & \>\>\>\> \>\>\>\> \otimes\frac{1}{\sqrt{2}}\big(\ket{\kappa_{\ell-1}=0} + \ket{\kappa_{\ell-1}=1} \big) \otimes\cdots
    \otimes\frac{1}{\sqrt{2}}\big(\ket{\kappa_0=0} + \ket{\kappa_0=1} \big)
\end{align*}

Now we apply the operator $e^{+i\frac{\pi}{2}\,\kappa_\ell}$, to get:

\begin{align*}
e^{+i\frac{\pi}{2}\,\kappa_\ell}\,R[f(x)] & \,H^{\otimes n} \,e^{+i\frac{\pi}{2}\,f_\ell(x)}\,\ket{T_\ell} \\
  &= e^{i\frac{\pi}{4}}\,
     \ket{\kappa_{n-1}=0}\otimes\cdots\otimes\ket{\kappa_{\ell+1}=0}
     \otimes\frac{1}{\sqrt{2}}\big(\ket{\kappa_\ell=0} + \ket{\kappa_\ell=1} \big) \\
  & \>\>\>\> \>\>\>\> \otimes\frac{1}{\sqrt{2}}\big(\ket{\kappa_{\ell-1}=0} + \ket{\kappa_{\ell-1}=1} \big) \otimes\cdots
    \otimes\frac{1}{\sqrt{2}}\big(\ket{\kappa_0=0} + \ket{\kappa_0=1} \big)
\end{align*}

Finally, applying the adjoint operator $H^{\otimes n}R^\dagger[f(x)]$ (which is equal to $B^\dagger[f(x)]$) to this state and summing over the $\kappa_j$ indices, we obtain:

\begin{align*}
H^{\otimes n}R^\dagger[f(x)]\,e^{+i\frac{\pi}{2}\,\kappa_\ell}\,R[f(x)] & \,H^{\otimes n} \,e^{+i\frac{\pi}{2}\,f_\ell(x)}\,\ket{T_\ell} \\
  &= \frac{e^{i\frac{\pi}{4}}}{\sqrt{2^{n-(\ell+1)}}}
  \sum_{y_0\ldots y_{n-1}}\,
  \delta_{y_\ell, 0}\ldots\delta_{y_0, 0}\,
  \ket{f^{-1}(y_{n-1}\ldots y_0)} \\
  &= \frac{e^{i\frac{\pi}{4}}}{\sqrt{2^{n-(\ell+1)}}}
  \sum_{y_{l+1}\ldots y_{n-1}}\,\ket{f^{-1}(y_{n-1}\ldots y_{\ell+1}\,0\ldots 0)} \\
  &= e^{i\frac{\pi}{4}}\,\ket{T_{\ell+1}}
\end{align*}

Where $\ket{T_{\ell+1}}$ is the state that satisfies the first $\ell+1$ partial oracle conditions: $f_0(x)=0,\ldots,f_\ell(x)=0$.

\subsection{Parallelization}
\label{parallelization}

Although it seems natural to apply the partial oracle conditions $\{f_\ell(x)\}_0^{n-1}$ sequentially---applying the partial oracle iteration of equation \ref{eq:partialoracleiteration-sequential} for each $\ell$---when we review the proof of the partial oracles algorithm in section \ref{partialoraclealgorithm}, we notice that each of the $\kappa_\ell$ qubits in reciprocal space get resolved independently of the others.
This suggests that it might be possible to apply all of the partial oracle conditions in a \textit{single} iteration (parallelization), instead of iterating $n$ times.
It turns out that this is indeed the case.

Equation \ref{eq:partialoracleiteration-single} shows the \textit{parallelized partial oracle iteration}, which can be applied to a uniform superposition state to obtain the desired target state in just a single iteration:

\begin{equation}
H^{\otimes n}\,R^\dagger[f(x)]\,e^{+i\frac{\pi}{2}\,\sum_{\ell=0}^{n-1}\kappa_\ell}\,R[f(x)]\,H^{\otimes n}\,e^{+i\frac{\pi}{2}\,\sum_{\ell=0}^{n-1} f_\ell(x)}
\label{eq:partialoracleiteration-single}
\end{equation}

Where the operator $e^{+i\frac{\pi}{2}\,\sum_{\ell=0}^{n-1} f_\ell(x)}$ applies the phase factor $e^{+i\frac{\pi}{2}}$ for every partial oracle condition, and in reciprocal space, the operator $e^{+i\frac{\pi}{2}\,\sum_{\ell=0}^{n-1}\kappa_\ell}$ applies the phase factor $e^{+i\frac{\pi}{2}}$ to all of the corresponding reciprocal qubits $\{\kappa_\ell\}_0^{n-1}$.

We recap some of the steps from the preceding proof of section \ref{partialoraclealgorithm} to see how the parallelized iteration works.
We start with the uniform superposition state:

\begin{equation*}
\ket{S} = \frac{1}{\sqrt{2^n}}\,\sum_x\,\ket{x}
\end{equation*}

After applying the operations $R[f(x)]\,H^{\otimes n}\,e^{+i\frac{\pi}{2}\,\sum_{\ell=0}^{n-1} f_\ell(x)}$ to this initial state, we get:

\begin{align*}
R[f(x)] \,H^{\otimes n} & \,e^{+i\frac{\pi}{2}\,\sum_{\ell=0}^{n-1} f_\ell(x)}\,\ket{S} \\
  &= e^{i\frac{\pi}{4}n}\,
     \frac{1}{\sqrt{2}}\big(\ket{\kappa_{n-1}=0} - i\ket{\kappa_{n-1}=1} \big)
     \otimes\cdots\otimes
     \frac{1}{\sqrt{2}}\big(\ket{\kappa_0=0} - i\ket{\kappa_0=1} \big)
\end{align*}

Now we apply the operator $e^{+i\frac{\pi}{2}\,\sum_{\ell=0}^{n-1}\kappa_\ell}$, to get:

\begin{align*}
e^{+i\frac{\pi}{2}\,\sum_{\ell=0}^{n-1}\kappa_\ell}\,R[f(x)] & \,H^{\otimes n} \,e^{+i\frac{\pi}{2}\,\sum_{\ell=0}^{n-1} f_\ell(x)}\,\ket{S} \\
  &= e^{i\frac{\pi}{4}n}\,
     \frac{1}{\sqrt{2}}\big(\ket{\kappa_{n-1}=0} + \ket{\kappa_{n-1}=1} \big)
     \otimes\cdots\otimes
     \frac{1}{\sqrt{2}}\big(\ket{\kappa_0=0} + \ket{\kappa_0=1} \big)
\end{align*}

Finally, applying the adjoint operator $H^{\otimes n}R^\dagger[f(x)]$ to this state and summing over the $\kappa_j$ indices, we obtain:

\begin{align*}
H^{\otimes n}R^\dagger[f(x)]\,e^{+i\frac{\pi}{2}\,\sum_{\ell=0}^{n-1}\kappa_\ell}
\,R[f(x)] & \,H^{\otimes n} \,e^{+i\frac{\pi}{2}\,\sum_{\ell=0}^{n-1} f_\ell(x)}\,\ket{S} \\
  &= e^{i\frac{\pi}{4}n}
  \sum_{y_0\ldots y_{n-1}}\,
  \delta_{y_{n-1}, 0}\ldots\delta_{y_0, 0}\,
  \ket{f^{-1}(y_{n-1}\ldots y_0)} \\
  &= e^{i\frac{\pi}{4}n} \ket{f^{-1}(0\ldots 0)} \\
  &= e^{i\frac{\pi}{4}n}\,\ket{T_n}
\end{align*}

Where $\ket{T_n}$ is the final target state that satisfies all $n$ of the partial oracle conditions.

Note that if you prefer to use the \textit{match on all ones} convention for the oracle (see section \ref{onetooneoracle}), by inspection of the preceding proof, we see that by replacing $e^{+i\frac{\pi}{2}\,\sum_{\ell=0}^{n-1}\kappa_\ell}$ by $e^{-i\frac{\pi}{2}\,\sum_{\ell=0}^{n-1}\kappa_\ell}$ (flipping the sign in the exponential) in equation \ref{eq:partialoracleiteration-single}, the parallelized partial oracle iteration returns the state $\ket{f^{-1}(1\ldots 1)}$ in the final step.

\subsection{Algorithm summary}
\label{algorithmsummary}

We can summarize the steps for implementing the parallelized partial oracles algorithm, as follows:

\begin{enumerate}
\item Before you start, calculate the reciprocal transform $R[f(x)]$ of the oracle function and figure out the quantum circuits needed to implement this operator (examples are given in section \ref{applicationtosha}).

\item Initialize the index register $x\in\{0,1\}^n$ in a uniform superposition state, by applying the Walsh-Hadamard transformation:
\[ H^{\otimes n}\,\ket{0} \]

\item Calculate the value of the multi-qubit oracle function $\{f_j(x)\}_0^{n-1}$ (which typically requires using ancillary qubits, in addition to the index qubits).

\item In order to apply the $e^{+i\frac{\pi}{2}\,\sum_{\ell=0}^{n-1} f_\ell(x)}$ operator, execute the phase gate (S gate) on all of the oracle qubits $\{f_j(x)\}_0^{n-1}$ (that is, the qubits that contain the result of calculating the multi-qubit oracle).

\item Uncompute the multi-qubit oracle function $\{f_j(x)\}_0^{n-1}$.

\item Map the index qubits to reciprocal space, by applying a Walsh-Hadamard transformation $H^{\otimes n}$.

\item Apply the circuit that implements the reciprocal operator $R[f(x)]$ (using the circuit worked out in step 1).

\item In order to apply the $e^{+i\frac{\pi}{2}\,\sum_{\ell=0}^{n-1}\kappa_\ell}$ operator, execute the phase gate (S gate) on all of the reciprocal space qubits $\{\kappa_j\}_0^{n-1}$.

\item Uncompute the reciprocal operator, by applying its adjoint operator $R^\dagger[f(x)]$.

\item Map the index qubits from reciprocal space back to direct space, by applying a Walsh-Hadamard transformation $H^{\otimes n}$ to the index qubits.

\item Measure the qubits from the index register to obtain the solution $x_s$ that satisfies the oracle constraint $f(x_s)=0$.

\end{enumerate}

\subsection{Chain rule theorem}
\label{chainruletheorem}

The reciprocal transform obeys a chain rule, which provides a powerful tool for decomposing complex transforms.

\begin{theorem}[Chain rule]
Given the bijective nested function $f^\sigma(g^\mu(x^\nu))$, the reciprocal transform of this nested function obeys the following identity:

\begin{equation}
\label{eq:chainruletheorem}
R[f^\sigma (g^\mu(x^\nu))] = R[f^\sigma (y^\mu)] \, R[g^\mu(x^\nu)]
\end{equation}
\end{theorem}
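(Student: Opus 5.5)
We prove the identity by multiplying out the right-hand side of equation \ref{eq:chainruletheorem} directly from the definition \ref{eq:reciprocaltransform}, and then removing the intermediate summations. We use the bare form $B$ and the relation $R=B\,H^{\otimes n}$: the inner Hadamard sums collapse to a delta function via equation \ref{eq:deltafunction}, and bijectivity lets us re-index a sum over intermediate values.

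\textbf{Step 1: write out the product.} Write $R[f^\sigma(y^\mu)]$ with dummy direct-space variable $y^\mu$, input reciprocal variable $q^\mu$ and output $\kappa^\sigma$. Write $R[g^\mu(x^\nu)]$ with dummy $x^\nu$, input $k^\nu$ and output $p^\mu$. Composing the two gives the inner product $\langle q^\mu|p^\mu\rangle=\delta_{q^\mu,p^\mu}$, which kills one sum. We are left with
\begin{equation*}
\frac{1}{N_\sigma^{1/2}N_\mu N_\nu^{1/2}}\sum_{\kappa^\sigma}\sum_{k^\nu}\sum_{y^\mu}\sum_{x^\nu}\sum_{p^\mu}
e^{i2\pi(-\kappa^\sigma\cdot f^\sigma(y^\mu)+y^\mu\cdot p^\mu-p^\mu\cdot g^\mu(x^\nu)+x^\nu\cdot k^\nu)}\,|\kappa^\sigma\rangle\langle k^\nu| .
\end{equation*}

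\textbf{Step 2: collapse the sum over $p^\mu$.} Each Hadamard-style dot product is $\tfrac12$ times a parity. For fixed $p^\mu$, the combination $e^{i2\pi p^\mu\cdot y^\mu}e^{-i2\pi p^\mu\cdot g^\mu(x^\nu)}$ therefore equals $e^{-i2\pi p^\mu\cdot(y^\mu\oplus g^\mu(x^\nu))}$, where $\oplus$ is bitwise XOR. This is the point that needs care. Sign conventions do not matter, since $e^{\pm i\pi b}$ agree for bits $b$. Checking qubit by qubit, $(-1)^{p_i y_i}(-1)^{p_i g_i}=(-1)^{p_i(y_i\oplus g_i)}$ holds because the dot product is really a parity, not an ordinary sum. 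Equation \ref{eq:deltafunction} then gives $\sum_{p^\mu}(\ldots)=N_\mu\,\delta_{y^\mu,g^\mu(x^\nu)}$.

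\textbf{Step 3: substitute and compare.} The factor $N_\mu$ cancels the $1/N_\mu$. Summing over $y^\mu$ sets $y^\mu=g^\mu(x^\nu)$, so $f^\sigma(y^\mu)$ becomes $f^\sigma(g^\mu(x^\nu))$. The remaining expression is
\begin{equation*}
\frac{1}{\sqrt{N_\sigma}\sqrt{N_\nu}}\sum_{\kappa^\sigma}\sum_{x^\nu}\sum_{k^\nu}e^{i2\pi(-\kappa^\sigma\cdot f^\sigma(g^\mu(x^\nu))+x^\nu\cdot k^\nu)}|\kappa^\sigma\rangle\langle k^\nu| ,
\end{equation*}
which is exactly $R[f^\sigma(g^\mu(x^\nu))]$ by equation \ref{eq:reciprocaltransform}.

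\textbf{Alternative via the bare form.} Equivalently, $R[f]R[g]=B[f]\,H\,B[g]\,H$, so it suffices to show $H\,B[g]=\sum_{x}|g(x)\rangle\langle x|$, the permutation operator $P_g$ of $g$. This is the same delta-function computation as Step 2. With it, $B[f]P_g=B[f\circ g]$ by re-indexing the sum via bijectivity of $g$. This gives a cleaner conceptual statement, and either route suffices.

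Shape bookkeeping is consistent throughout: all the $N$'s equal $2^n$, and the shape $\mu$ is shared between the output of $g$ and the input of $f$. The main thing to verify carefully is the parity identity in Step 2.
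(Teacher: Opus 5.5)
Your Steps 1--3 are correct and match the paper's proof essentially step for step: you eliminate the intermediate reciprocal index via the inner product, sum over it with the delta-function identity to obtain $\delta_{y^\mu,g^\mu(x^\nu)}$, then sum over $y^\mu$, and your explicit XOR/parity bookkeeping is a more careful reading of the paper's $y^\mu - g^\mu(x^\nu)$. Your bare-form alternative is also valid and amounts to $R[f]=H^{\otimes n}P_fH^{\otimes n}$, so the chain rule reduces to $P_{f\circ g}=P_fP_g$; neither route actually uses bijectivity for the identity itself (only for unitarity of the operators), despite your remark about re-indexing.
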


\begin{proof}
Consider the product of reciprocal transforms:

\begin{align*}
R[f^\sigma (y^\mu)] \, R[g^\mu(x^\nu)] \\
  = \frac{1}{\sqrt{N_\sigma N_\nu}}\frac{1}{N_\mu} &
    \sum_{\kappa^{\prime\sigma}}\sum_{y^\mu}\sum_{k^{\prime\mu}}
    \sum_{\kappa^\mu}\sum_{x^\nu}\sum_{k^\nu}
    e^{i2\pi\,(-\kappa^{\prime\sigma}\cdot f^\sigma(y^\mu) + y^\mu\cdot k^{\prime\mu} -\kappa^\mu\cdot g^\mu(x^\nu) + x^\nu\cdot k^\nu )}
    |\kappa^{\prime\sigma}\rangle\langle k^{\prime\mu}|\kappa^\mu\rangle\langle k^\nu|
    \\
  = \frac{1}{\sqrt{N_\sigma N_\nu}}\frac{1}{N_\mu} &
    \sum_{\kappa^{\prime\sigma}}\sum_{y^\mu}
    \sum_{\kappa^\mu}\sum_{x^\nu}\sum_{k^\nu}
    e^{i2\pi\,(-\kappa^{\prime\sigma}\cdot f^\sigma(y^\mu) + \kappa^\mu\cdot(y^\mu- g^\mu(x^\nu)) + x^\nu\cdot k^\nu )}
    |\kappa^{\prime\sigma}\rangle\langle k^\nu|
    \\
  = \frac{1}{\sqrt{N_\sigma N_\nu}}
    \sum_{\kappa^{\prime\sigma}} & \sum_{y^\mu}
    \sum_{x^\nu}\sum_{k^\nu}
    e^{i2\pi\,(-\kappa^{\prime\sigma}\cdot f^\sigma(y^\mu) + x^\nu\cdot k^\nu )}
    \delta\left(y^\mu - g^\mu(x^\nu)\right)
    |\kappa^{\prime\sigma}\rangle\langle k^\nu|
    \\
  = \frac{1}{\sqrt{N_\sigma N_\nu}}
    \sum_{\kappa^{\prime\sigma}} &
    \sum_{x^\nu}\sum_{k^\nu}
    e^{i2\pi\,(-\kappa^{\prime\sigma}\cdot f^\sigma(g^\mu(x^\nu)) + x^\nu\cdot k^\nu )}
    |\kappa^{\prime\sigma}\rangle\langle k^\nu|
    \>\> = \>\> R[f^\sigma (g^\mu(x^\nu))]
\end{align*}

Where, in the first step, we use the delta function $\langle k^{\prime\mu}|\kappa^\mu\rangle$ to eliminate the variable $k^{\prime\mu}$ and replace it by $\kappa^\mu$ and, in the second step, we use equation \ref{eq:deltafunction} to sum over $\kappa^\mu$ and obtain the delta function $\delta\left(y^\mu - g^\mu(x^\nu)\right)$.

\end{proof}

\subsection{Target value not required in reciprocal operators}

In a typical search problem, we are given a one-to-one function $g^\sigma(x^\mu):\{0,1\}^n\rightarrow\{0,1\}^n$ and we then search for the value of $x^\mu$ for which $g^\sigma(x^\mu)$ returns a specific target value $y^\sigma_t$.  In other words, we search for the value (or values) of $x^\mu$ that satisfy:

\begin{equation}
g^\sigma(x^\mu) = y^\sigma_t
\end{equation}

If we adopt the \textit{match on all zeros} convention, we can then define the corresponding oracle function for this problem as:

\begin{equation}
f^\sigma(x^\mu) = g^\sigma(x^\mu)\oplus y^\sigma_t
\end{equation}

Where $\oplus$ represents the bitwise XOR operation (that is, each bit $g^\sigma_j(x^\mu)$ is XORed with each bit $y^\sigma_{t, j}$).

Now, when it comes to calculating the reciprocal transform of $f^\sigma(x^\mu)$, we can substitute $f^\sigma(x^\mu) = g^\sigma(x^\mu)\oplus y^\sigma_t$ into the reciprocal transform equation \ref{eq:reciprocaltransform}. It might appear necessary to keep track of the value $y^\sigma_t$ while we are calculating the reciprocal transform, but this is not the case.

Consider the reciprocal part of the partial oracles algorithm:

\begin{align*}
R^\dagger & [f^\sigma(x^{\prime\mu})] \left( \prod_{j=0}^n e^{+i\pi\,\kappa_j} \right) R[f^\sigma(x^\mu)] \\
 &=
    \frac{1}{N_\sigma}\frac{1}{N_\mu}
    \sum_{k^{\prime\mu}}\sum_{x^{\prime\mu}}\sum_{\kappa^\sigma}
    e^{i2\pi\,(+\kappa^\sigma\cdot (g^\sigma(x^{\prime\mu})\oplus y^\sigma_t) - x^{\prime\mu}\cdot k^{\prime\mu} )}
    |k^{\prime\mu}\rangle\langle \kappa^\sigma|\kappa^\sigma\rangle
    \left( \prod_{j=0}^n e^{+i\pi\,\kappa_j} \right) \langle \kappa^\sigma| \\
 & \>\>\>\>\>\>\>\>\>\>\>\> \>\>\>\>\>\>\>\>\>\>\>\>
    \sum_{x^\mu}\sum_{k^\mu}
    e^{i2\pi\,(-\kappa^\sigma\cdot (g^\sigma(x^\mu)\oplus y^\sigma_t) + x^\mu\cdot k^\mu )}
    |\kappa^\sigma\rangle\langle k^\mu|
\end{align*}

When we compare the target term $+i2\pi\,\kappa^\sigma\cdot y^\sigma_t$ from the adjoint reciprocal operator $R^\dagger$ with the target term $-i2\pi\,\kappa^\sigma\cdot y^\sigma_t$ from the reciprocal operator $R$, we note that they have opposite signs and thus cancel each other out. Hence, the target value $y^\sigma_t$ has no significance here and we can conveniently set it to zero.

\section{Application to SHA algorithms}
\label{applicationtosha}

To illustrate how to apply the partial oracles algorithm in practice, we consider the problem of how to invert the SHA algorithms from the Secure Hash Standard \cite{SHS_FIPS_180_4}.

\subsection{SHA-256 functions and modulo addition}
\label{sha256functions}

The SHA-256 hash algorithm requires the following logical functions, which operate on the 32-bit words $x$, $y$, and $z$:

\begin{align}
Maj(x, y, z) &= (x\wedge y)\oplus(y\wedge z)\oplus(z\wedge x) \\
Ch(x, y, z)  &= (x\wedge y)\oplus(\neg x\wedge z) \\
\Sigma^{\{256\}}_0(x) &= ROTR^2(x) \oplus ROTR^{13}(x) \oplus ROTR^{22}(x) \\
\Sigma^{\{256\}}_1(x) &= ROTR^6(x) \oplus ROTR^{11}(x) \oplus ROTR^{25}(x) \\
\sigma^{\{256\}}_0(x) &= ROTR^7(x) \oplus ROTR^{18}(x) \oplus SHR^3(x) \\
\sigma^{\{256\}}_1(x) &= ROTR^{17}(x) \oplus ROTR^{19}(x) \oplus SHR^{10}(x)
\end{align}

The $ROTR^n(x)$ and $SHR^n(x)$ functions are defined as follows:

\begin{itemize}
\item $ROTR^n(x)$---given the $w$-bit word $x$ and integer value $n$ (where $0\le n < w$), the rotate right function is defined as $ROTR^n(x)=(x >> n)\vee (x << w - n)$.

\item $SHR^n(x)$---given the $w$-bit word $x$ and integer value $n$, the shift right function is defined as $SHR^n(x)=x >> n$.
\end{itemize}

The SHA-256 hash algorithm also makes use of addition modulo $2^{32}$. That is, wherever a $+$ sign occurs in the algorithm description, it is to be interpreted as addition modulo $2^{32}$ (equivalent to 32-bit addition with no carry from the final bit).

\subsection{Majority function Maj(x, y, z)}
\label{majorityfunction}

Consider the 1-bit word version of the majority function $Maj(a, b, c)$ (that is, where $a$, $b$, and $c$ each represent single qubits):

\begin{equation}
Maj(a, b, c) = (a\wedge b) \oplus (b\wedge c) \oplus (c\wedge a)
\end{equation}

We can then define the following oracle function (completing the oracle to make a 3-bit one-to-one function, as discussed in section \ref{onetooneoracle}):

\begin{align}
f_0 &= (a\wedge b) \oplus (b\wedge c) \oplus (c\wedge a) \\
f_1 &= a\oplus b \\
f_2 &= a\oplus c
\end{align}

Figure \ref{fig:majcircuit} shows the corresponding circuit for the majority oracle function, which requires just three gates to implement (for example, see the definition of the MAJ operation in the Cuccaro adder \cite{cuccaro2004newquantumripplecarryaddition}).

\begin{figure}
\begin{tikzpicture}
    \begin{yquant}
        qubit {$a$} a;
        qubit {$b$} b;
        qubit {$c$} c;
        hspace {4mm} -;

		cnot b | a;
		cnot c | a;
		cnot a | b, c;
		
        hspace {4mm} -;
        output {$f_0$} a;
        output {$f_1$} b;
        output {$f_2$} c;
    \end{yquant}
\end{tikzpicture}
\caption{Circuit for the majority function $Maj(a,b,c)$}\label{fig:majcircuit}
\end{figure}
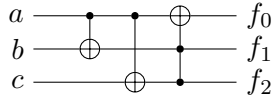

For calculating the reciprocal transform of the majority function, it is useful to know the inverse of the oracle function, which is given by:

\begin{align}
a &= f_0\oplus(f_1\wedge f_2) \label{eq:majinverse_a} \\
b &= f_1\oplus f_0\oplus(f_1\wedge f_2) \label{eq:majinverse_b} \\
c &= f_2\oplus f_0\oplus(f_1\wedge f_2) \label{eq:majinverse_c}
\end{align}

From equation \ref{eq:reciprocaltransform}, the reciprocal transform of the majority gate is given by:

\begin{align*}
R[f(a,b,c)] = \frac{1}{2^3}
    \sum_{\kappa}\sum_{a, b, c}\sum_{k}
    e^{i\pi\,(-(\kappa_0 f_0 + \kappa_1 f_1 + \kappa_2 f_2) + a k_a + b k_b + c k_c )}
    \ket{\kappa_0}\ket{\kappa_1}\ket{\kappa_2}\bra{k_a}\bra{k_b}\bra{k_c}
\end{align*}

Using equations \ref{eq:majinverse_a} to \ref{eq:majinverse_c} to replace $a$, $b$, and $c$ in the preceding expression, we get:

\begin{align*}
\frac{1}{2^3}
\sum_{\kappa}\sum_{f_0, f_1, f_2}\sum_{k} &
e^{i\pi\,(-(\kappa_0 f_0 + \kappa_1 f_1 + \kappa_2 f_2) + (f_0 + f_1 f_2) k_a + (f_1 + f_0 + f_1 f_2) k_b + (f_2 + f_0 + f_1 f_2) k_c )} \\
& \>\>\>\>\>\>\>\> \>\>\>\>\>\>\>\> \times \ket{\kappa_0}\ket{\kappa_1}\ket{\kappa_2}\bra{k_a}\bra{k_b}\bra{k_c}
\end{align*}

By summing over $f_0$ (taking the values $0$ and $1$), then $f_1$, and then $f_2$, we get the expression:

\begin{equation}
\frac{1}{2}\delta_{\kappa_0, p}\left( \delta_{\kappa_1, k_b} + \delta_{\kappa_1, p\oplus k_b} (-1)^{k_c\oplus\kappa_2} \right)\,
\ket{\kappa_0}\ket{\kappa_1}\ket{\kappa_2}\bra{k_a}\bra{k_b}\bra{k_c}
\label{eq:recipmajoritymatrix}
\end{equation}

Where $p$ is the parity $p=k_a\oplus k_b\oplus k_c$. If we consider the case where $p=0$, then the expression reduces to $\delta_{\kappa_0,0}\delta_{\kappa_1,k_b}\delta_{\kappa_2,k_c}$, so that qubits $k_b$ and $k_c$ pass straight through, unchanged. On the other hand, if we consider the case where $p=1$, the expression becomes $\frac{1}{2}\delta_{\kappa_0, 1}\left( \delta_{\kappa_1, k_b} + \delta_{\kappa_1, \overline{k_b}} (-1)^{k_c\oplus\kappa_2} \right)$, where $\overline{k_b}$ is the logical negation of $k_b$.

The term in parentheses $\delta_{\kappa_1, k_b} + \delta_{\kappa_1, \overline{k_b}} (-1)^{k_c\oplus\kappa_2}$ represents a 4x4 Walsh matrix, which is an orthogonal (therefore, also unitary) matrix whose elements are either $+1$ or $-1$. But this matrix is not the familiar 4x4 matrix we get by applying $H^{\otimes 2}$, as it has only one minus sign in each row or column. With a little trial and error, we find that this Walsh matrix can be implemented by the circuit shown in figure \ref{fig:walshcircuit}. This circuit requires on ancillary qubit initialized in the state $\ket{-}=\left(\ket{0} - \ket{1}\right)/\sqrt{2}$ and the phase kickback from $\ket{-}$ is used to flip the signs on the last row and the last column of the matrix.

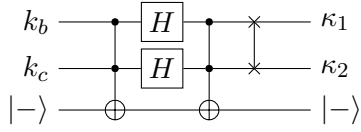
\begin{figure}

\begin{tikzpicture}
    \begin{yquant}
        qubit {$k_b$} kb;
        qubit {$k_c$} kc;
        qubit {$\ket{-}$} ph;
        hspace {4mm} -;

		cnot ph | kb, kc;
		h kb;
		h kc;
		cnot ph | kb, kc;
		swap (kb, kc);
		
        hspace {4mm} -;
        output {$\kappa_1$} kb;
        output {$\kappa_2$} kc;
        output {$\ket{-}$} ph;
    \end{yquant}
\end{tikzpicture}

\caption{Circuit for the $\delta_{\kappa_1, k_b} + \delta_{\kappa_1, \overline{k_b}} (-1)^{k_c\oplus\kappa_2}$ Walsh matrix}\label{fig:walshcircuit}
\end{figure}

To get the complete circuit for the expression in equation \ref{eq:recipmajoritymatrix}, we note that the $k_a$ qubit needs to be initialized with the parity $p=k_a\oplus k_b\oplus k_c$ and the execution of the 4x4 Walsh matrix needs to be made conditional on this parity. This gives us the circuit shown in figure \ref{fig:recipmajoritygate}, which is the complete circuit for the reciprocal majority gate.

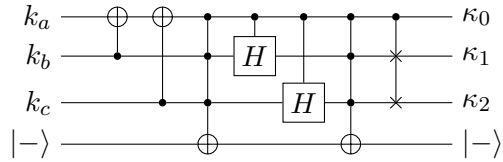
\begin{figure}[h]
\centering
\begin{tikzpicture}
    \begin{yquant}
        qubit {$k_a$} ka;
        qubit {$k_b$} kb;
        qubit {$k_c$} kc;
        qubit {$\ket{-}$} ph;
        hspace {4mm} -;

		cnot ka | kb;
		cnot ka | kc;
		cnot ph | ka, kb, kc;
		h kb | ka;
		h kc | ka;
		cnot ph | ka, kb, kc;
		swap (kb, kc) | ka;
		
        hspace {4mm} -;
        output {$\kappa_0$} ka;
        output {$\kappa_1$} kb;
        output {$\kappa_2$} kc;
        output {$\ket{-}$} ph;
    \end{yquant}
\end{tikzpicture}

\caption{Reciprocal majority gate}\label{fig:recipmajoritygate}
\end{figure}

\subsection{Choose function Ch(x, y, z)}
\label{choosefunction}

Consider the 1-bit word version of the choose function $Ch(a,b,c)$ (that is, where $a$, $b$, and $c$ each represent single qubits):

\begin{equation}
Ch(a, b, c)  = (a\wedge b)\oplus(\neg a\wedge c)
\end{equation}

We can then define the following complete 3-bit oracle function $f(a,b,c):\{0,1\}^3\rightarrow\{0,1\}^3$:

\begin{align}
f_0 &= a \\
f_1 &= b\oplus c \\
f_2 &= (a\wedge b)\oplus(\neg a\wedge c)
\end{align}

Figure \ref{fig:chcircuit} shows the corresponding circuit for the choose function, which requires just two gates to implement.

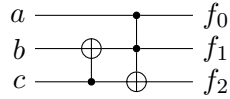
\begin{figure}[h]
\centering
\begin{tikzpicture}
    \begin{yquant}
        qubit {$a$} a;
        qubit {$b$} b;
        qubit {$c$} c;
        hspace {4mm} -;

		cnot b | c;
		cnot c | a, b;
		
        hspace {4mm} -;
        output {$f_0$} a;
        output {$f_1$} b;
        output {$f_2$} c;
    \end{yquant}
\end{tikzpicture}

\caption{Circuit for the choose function $Ch(a,b,c)$}\label{fig:chcircuit}
\end{figure}

The inverse of the choose oracle function is given by:

\begin{align}
a &= f_0 \label{eq:chinverse_a} \\
b &= f_2 + (1-f_0)f_1 \label{eq:chinverse_b} \\
c &= f_2 + f_0 f_1 \label{eq:chinverse_c}
\end{align}

From equation \ref{eq:reciprocaltransform}, the reciprocal transform of the choose gate is given by:

\begin{align*}
R[f(a,b,c)] = \frac{1}{2^3}
    \sum_{\kappa}\sum_{a, b, c}\sum_{k}
    e^{i\pi\,(-(\kappa_0 f_0 + \kappa_1 f_1 + \kappa_2 f_2) + a k_a + b k_b + c k_c )}
    \ket{\kappa_0}\ket{\kappa_1}\ket{\kappa_2}\bra{k_a}\bra{k_b}\bra{k_c}
\end{align*}

Using equations \ref{eq:chinverse_a} to \ref{eq:chinverse_c} to replace $a$, $b$, and $c$ in the preceding expression, we get:

\begin{align*}
\frac{1}{2^3}
\sum_{\kappa}\sum_{f_0, f_1, f_2}\sum_{k} &
e^{i\pi\,(-(\kappa_0 f_0 + \kappa_1 f_1 + \kappa_2 f_2) + f_0 k_a + (f_2 + (1-f_0)f_1) k_b + (f_2 + f_0 f_1) k_c )} \\
& \>\>\>\>\>\>\>\> \>\>\>\>\>\>\>\> \times \ket{\kappa_0}\ket{\kappa_1}\ket{\kappa_2}\bra{k_a}\bra{k_b}\bra{k_c}
\end{align*}

By summing over $f_0$ (taking the values $0$ and $1$), then $f_1$, and then $f_2$, we get the expression:

\begin{equation}
\frac{1}{2}\delta_{\kappa_2, p}\left( \delta_{\kappa_0, k_a} + \delta_{\kappa_0, p\oplus k_a} (-1)^{k_b\oplus\kappa_1} \right)\,
\ket{\kappa_0}\ket{\kappa_1}\ket{\kappa_2}\bra{k_a}\bra{k_b}\bra{k_c}
\label{eq:recipchoosematrix}
\end{equation}

Where $p=k_b\oplus k_c$. If we consider the case where $p=0$, the expression reduces to $\delta_{\kappa_2, 0}\delta_{\kappa_0, k_a}\delta_{\kappa_1, k_b}$, so that qubits $k_a$ and $k_b$ pass straight through, unchanged. On the other hand, if we consider the case where $p=1$, the expression becomes $\delta_{\kappa_2, 1}\left( \delta_{\kappa_0, k_a} + \delta_{\kappa_0, \overline{k_a}} (-1)^{k_b\oplus\kappa_1} \right)$.

Similarly to the case of the majority function, the term in parentheses $\delta_{\kappa_0, k_a} + \delta_{\kappa_0, \overline{k_a}} (-1)^{k_b\oplus\kappa_1}$ represents a 4x4 Walsh matrix.

To get the complete circuit for the expression in equation \ref{eq:recipchoosematrix}, we note that the $k_c$ qubit needs to be initialized with the parity $p=k_b\oplus k_c$ and the execution of the 4x4 Walsh matrix needs to be made conditional on this parity. This gives us the circuit shown in figure \ref{fig:recipchoosegate}, which is the complete circuit for the reciprocal choose gate.

\begin{figure}[h]
\centering
\begin{tikzpicture}
    \begin{yquant}
        qubit {$k_a$} ka;
        qubit {$k_b$} kb;
        qubit {$k_c$} kc;
        qubit {$\ket{-}$} ph;
        hspace {4mm} -;

		cnot kc | kb;
		cnot ph | ka, kb, kc;
		h ka | kc;
		h kb | kc;
		cnot ph | ka, kb, kc;
		swap (ka, kb) | kc;
		
        hspace {4mm} -;
        output {$\kappa_0$} ka;
        output {$\kappa_1$} kb;
        output {$\kappa_2$} kc;
        output {$\ket{-}$} ph;
    \end{yquant}
\end{tikzpicture}

\caption{Reciprocal choose gate}\label{fig:recipchoosegate}
\end{figure}

\subsection{Modulo addition}

The SHA algorithms use a simple kind of modulo addition that involves adding two $w$-bit words and discarding (or simply not calculating) the highest order carry bit. For example, in the case of 32-bit word modulo addition, this corresponds to addition modulo $2^{32}$.

To calculate the reciprocal transform of a modulo adder gate, we exploit the chain rule theorem from section \ref{chainruletheorem}: using the chain rule, the sequence of carry gates and sum gates that we use to define an adder gate can be converted into a corresponding sequence of \textit{reciprocal carry} and \textit{reciprocal sum} gates in the reciprocal adder gate.

Before we can calculate the reciprocal adder gate, however, we need an ordinary adder gate, which can serve as a template for the reciprocal circuit. There are two building blocks we need for the adder gate:

\begin{itemize}
\item \textit{Carry gate} ($C$)---it turns out that we already have a circuit that implements a carry gate. The $Maj(a,b,c)$ gate shown in figure \ref{fig:majcircuit} behaves as an \textit{in-place carry gate} \cite{cuccaro2004newquantumripplecarryaddition}. Before applying this carry gate, qubit $a$ corresponds to the carry-in bit and qubits $b$ and $c$ correspond to the bits from the summands. After applying this gate, qubit $a$ corresponds to the carry-out bit, while qubits $b$ and $c$ are left in an unclean state (subsequently fixed when we uncompute the carry gate). 

\item \textit{Sum gate} ($S$)---has the circuit implementation shown in figure \ref{fig:sumgate}. In this circuit, qubit $c$ corresponds to carry-in, while qubits $a$ and $b$ correspond to the bits from the summands. After applying this gate, qubit $b$ holds the sum $a\oplus b\oplus c$, while qubits $a$ and $c$ remain unchanged.
\end{itemize}

\begin{figure}[h]
\centering
\begin{tikzpicture}
    \begin{yquant}
        qubit {$c$} c;
        qubit {$a$} a;
        qubit {$b$} b;
        hspace {4mm} -;

		cnot b | a;
		cnot b | c;
		
        hspace {4mm} -;
        output {$c$} c;
        output {$a$} a;
        output {$s=a\oplus b\oplus c$} b;
    \end{yquant}
\end{tikzpicture}
\caption{Sum gate}\label{fig:sumgate}
\end{figure}
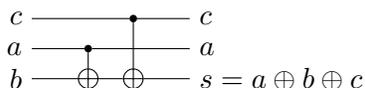

The circuit for the modulo adder gate (in direct space) is shown in figure \ref{fig:addergate}. This adder acts on two 4-qubit registers $x^{(1)}$ and $x^{(2)}$ and functions as an in-place modulo-$16$ adder, returning the sum in register $x^{(2)}$. In other words, it calculates $x^{(2)} = x^{(2)} + x^{(1)}$.
The wavy line between boxes in figure \ref{fig:addergate} is used to skip over qubits not affected by the relevant gate. For example, the second carry gate from the left operates on the qubits \texttt{carry}, $x^{(1)}_1$, $x^{(2)}_1$, but not on the qubits $x^{(1)}_0$, $x^{(2)}_0$.
The carry gates in figure \ref{fig:addergate} are wired up such that the carry-out is always written to the \texttt{carry} bit.

\begin{figure}[h]
\centering
\begin{tikzpicture}
    \begin{yquant}
        qubit {carry $\ket{0}$} c;
        qubit {$x^{(1)}_0$} x10;
        qubit {$x^{(2)}_0$} x20;
        qubit {$x^{(1)}_1$} x11;
        qubit {$x^{(2)}_1$} x21;
        qubit {$x^{(1)}_2$} x12;
        qubit {$x^{(2)}_2$} x22;
        qubit {$x^{(1)}_3$} x13;
        qubit {$x^{(2)}_3$} x23;
        hspace {4mm} -;

	    box {$C$} (x10, x20, c);
        hspace {4mm} -;
	    box {$C$} (x11, x21, c);
        hspace {4mm} -;
		box {$C$} (x12, x22, c);
        hspace {4mm} -;
		box {$S$} (x13, x23, c);
        hspace {4mm} -;
		box {$C^\dagger$} (x12, x22, c);
        hspace {4mm} -;
		box {$S$} (x12, x22, c);
        hspace {4mm} -;
		box {$C^\dagger$} (x11, x21, c);
        hspace {4mm} -;
		box {$S$} (x11, x21, c);
        hspace {4mm} -;
		box {$C^\dagger$} (x10, x20, c);
        hspace {4mm} -;
		box {$S$} (x10, x20, c);
		
        hspace {4mm} -;
    \end{yquant}
\end{tikzpicture}

\caption{Modulo adder gate}\label{fig:addergate}
\end{figure}
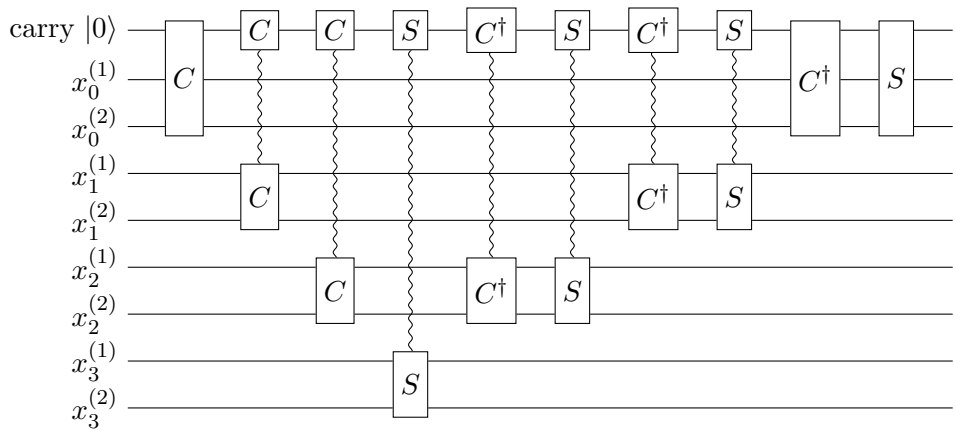

To implement the reciprocal modulo adder, we need the following building blocks in reciprocal space:

\begin{itemize}
\item \textit{Reciprocal carry gate} ($rC$)---since the $Maj(a,b,c)$ gate is equivalent to an in-place carry gate, it follows that the corresponding reciprocal majority gate from figure \ref{fig:recipmajoritygate} is equivalent to a reciprocal carry gate. When wiring up this gate, note that the $k_a/\kappa_0$ qubit from figure \ref{fig:recipmajoritygate} corresponds to the reciprocal carry qubit. Note also that this gate requires an additional $\ket{-}$ ancillary qubit to provide a phase kickback.

\item \textit{Reciprocal sum gate} ($rS$)---to obtain the reciprocal sum gate, we need to calculate the reciprocal of the circuit shown in figure \ref{fig:sumgate}, which we now proceed to do.
\end{itemize}

The oracle function corresponding to the sum gate from figure \ref{fig:sumgate} can be defined as:

\begin{align}
f_0 &= c \\
f_1 &= a \\
f_2 &= a\oplus b\oplus c
\end{align}

From equation \ref{eq:reciprocaltransform}, the reciprocal transform of the sum gate is given by:

\begin{equation}
\frac{1}{2^3}
\sum_{\kappa}\sum_{a,b,c}\sum_{k}
e^{i\pi\,(-(\kappa_0 c + \kappa_1 a + \kappa_2 (a\oplus b\oplus c)) + a k_a + b k_b + c k_c )}
\, \ket{\kappa_0}\ket{\kappa_1}\ket{\kappa_2}\bra{k_a}\bra{k_b}\bra{k_c}
\end{equation}

By summing over $a$ (taking the values $0$ and $1$), then $b$, and then $c$, we get the expression:

\begin{equation}
\delta_{k_a, \kappa_1\oplus\kappa_2}\delta_{k_b, \kappa_2}\delta_{k_c, \kappa_0\oplus\kappa_2}\,
\ket{\kappa_0}\ket{\kappa_1}\ket{\kappa_2}\bra{k_a}\bra{k_b}\bra{k_c}
\label{eq:recipsummatrix}
\end{equation}

Inverting the relation between $(k_a, k_b, k_c)$ and $(\kappa_a, \kappa_b, \kappa_c)$ gives us:

\begin{align*}
\kappa_0 &= k_b\oplus k_c \\
\kappa_1 &= k_b\oplus k_a \\
\kappa_2 &= k_b
\end{align*}

Which gives the reciprocal sum gate circuit, as shown in figure \ref{fig:recipsumgate}.

\begin{figure}[h]
\centering
\begin{tikzpicture}
    \begin{yquant}
        qubit {$k_c$} kc;
        qubit {$k_a$} ka;
        qubit {$k_b$} kb;
        hspace {4mm} -;

		cnot ka | kb;
		cnot kc | kb;
		
        hspace {4mm} -;
        output {$\kappa_0$} kc;
        output {$\kappa_1$} ka;
        output {$\kappa_2$} kb;
    \end{yquant}
\end{tikzpicture}
\caption{Reciprocal sum gate}\label{fig:recipsumgate}
\end{figure}
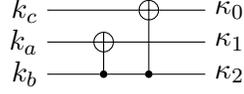

The circuit for the reciprocal modulo adder gate is shown in figure \ref{fig:recipaddergate}. The reciprocal adder acts on two 4-qubit registers $k^{(1)}$ and $k^{(2)}$ (which have been mapped by a Walsh-Hadamard transformation from the $x^{(1)}$ and $x^{(2)}$ registers in direct space). This circuit has a more or less analogous layout to the direct adder circuit from \ref{fig:addergate}, but with the reciprocal carry gate $rC$ in place of the carry gate and the reciprocal sum gate $rS$ in place of the sum gate.

Note that the reciprocal modulo adder gate has an additional ancillary qubit $\ket{-}$, which is needed by the reciprocal carry gate. Also note that, because the \texttt{carry} qubit is represented in reciprocal space, it must be initialized in the $\ket{+}$ state (not the $\ket{0}$ state, as used in direct space).

\begin{figure}[h]
\centering
\begin{tikzpicture}
    \begin{yquant}
        qubit {anc $\ket{-}$} ph;
        qubit {carry $\ket{+}$} c;
        qubit {$k^{(1)}_0$} k10;
        qubit {$k^{(2)}_0$} k20;
        qubit {$k^{(1)}_1$} k11;
        qubit {$k^{(2)}_1$} k21;
        qubit {$k^{(1)}_2$} k12;
        qubit {$k^{(2)}_2$} k22;
        qubit {$k^{(1)}_3$} k13;
        qubit {$k^{(2)}_3$} k23;
        hspace {4mm} -;

	    box {$rC$} (k10, k20, c, ph);
        hspace {4mm} -;
	    box {$rC$} (k11, k21, c, ph);
        hspace {4mm} -;
		box {$rC$} (k12, k22, c, ph);
        hspace {4mm} -;
		box {$rS$} (k13, k23, c);
        hspace {4mm} -;
		box {$rC^\dagger$} (k12, k22, c, ph);
        hspace {4mm} -;
		box {$rS$} (k12, k22, c);
        hspace {4mm} -;
		box {$rC^\dagger$} (k11, k21, c, ph);
        hspace {4mm} -;
		box {$rS$} (k11, k21, c);
        hspace {4mm} -;
		box {$rC^\dagger$} (k10, k20, c, ph);
        hspace {4mm} -;
		box {$rS$} (k10, k20, c);
		
        hspace {4mm} -;
    \end{yquant}
\end{tikzpicture}

\caption{Reciprocal modulo adder gate}\label{fig:recipaddergate}
\end{figure}
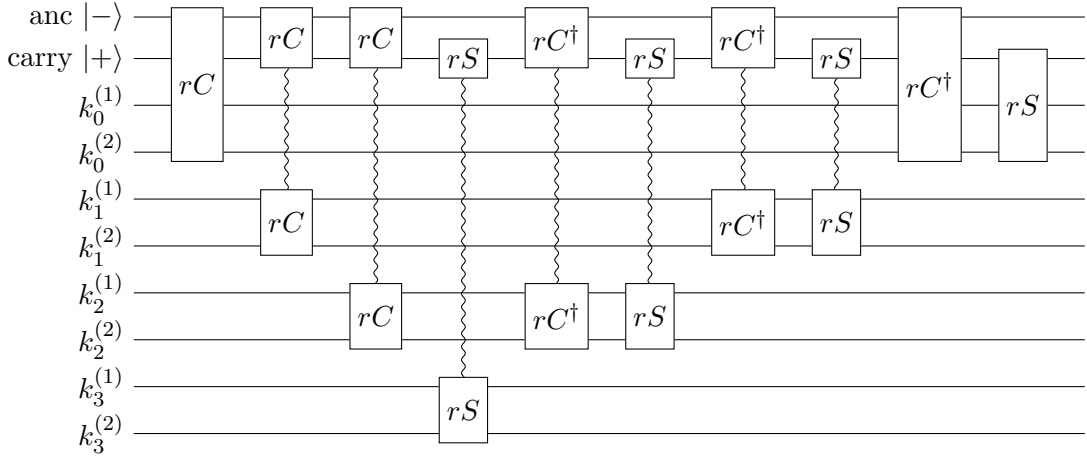

\subsection{Bit shifting functions}
\label{bitshiftingfunctions}

We have already introduced the standard bit-shifting functions needed for the SHA-256 hash algorithm in section \ref{sha256functions}.
In this section, we generalize the definitions, in order to make it easier to deal with a variety of different bit-shifting functions.

Given a bit width $w$, we can conveniently define define the basic bit-shifting functions $ROTR^a_j(x)$, $ROTL^b_j(x)$, $SHR^c_j(x)$, and $SHL^d_j(x)$, where the $\{j\}_0^{w-1}$ subscripts index the bits of the returned rotated/shifted register, as follows:

\begin{align}
ROTR^a_j(x) &= x_{j+a\textrm{ mod } w} & \forall a\in\mathbb{Z} 
\label{eq:rotrdefinition} \\
ROTL^b_j(x) &= x_{j-b\textrm{ mod } w} & \forall b\in\mathbb{Z} 
\label{eq:rotldefinition} \\
SHR^c_j(x)    &= x_{j+c}\,\theta(w-1-(j+c))\,\theta(j+c) & -w < c < w \textrm{ and } c\neq 0
\label{eq:shrdefinition} \\
SHL^d_j(x)    &= x_{j-d}\,\theta(w-1-(j-d))\,\theta(j-d) & -w < d < w \textrm{ and } d\neq 0
\label{eq:shldefinition}
\end{align}

The Heaviside theta function $\theta(x)$ is defined as: $\theta(x)=0$ for $x<0$ and $\theta(x)=1$ for $x\ge 0$.
Usually, the shift functions are defined in such a way that they are valid only for positive superscripts, but with the definitions given in equations \ref{eq:rotrdefinition} to \ref{eq:shldefinition}, these functions are valid for both positive and negative shifts.
This makes the right and left-shift functions more or less interchangeable, just by flipping the sign of the shift amount, for example:

\begin{align*}
ROTL^a(x) &= ROTR^{-a}(x) \\
SHL^c(x)  &= SHR^{-c}(x)
\end{align*}

We can now define the form of the general shift operation, as follows:

\begin{equation}
\label{eq:shiftoperationgeneraldefinition}
\sigma_{(a,b,\ldots)(l,\ldots)}(x) = ROTR^a(x)\oplus ROTR^b(x)\oplus\cdots\oplus SHR^l(x)\oplus\cdots
\end{equation}

Where $(a,b,\dots)$ is the list of $ROTR$ superscripts and $(l,\ldots)$ is the list of $SHR$ superscripts.
For notational convenience, we can abbreviate the shift type with a greek letter, for example:

\begin{equation}
\mu = (a,b,c,\ldots)(l,m,\ldots)
\end{equation}

Which allows us to write, for example: $\sigma_\mu(x) \equiv \sigma_{(a,b,c,\ldots)(l,m,\ldots)}(x)$.
Given a shift type $\mu=(a,b,c)(l,m)$, we also define the corresponding \textit{complementary shift type} $\tilde{\mu}$, as follows:

\begin{equation}
\tilde{\mu} = (-a,-b,-c)(-l,-m)
\label{eq:complementaryshifttype}
\end{equation}

That is, the complement negates all of the shift amounts, both for the $ROTR$ and the $SHR$ operations.

Given a shift operation $\sigma_\mu(x)$  that is invertible, we can define its inverse $\sigma_\mu^{-1}(x)$, for which: $\sigma_\mu^{-1}(\sigma_\mu(x))=\sigma_\mu(\sigma_\mu^{-1}(x))=x$.
Note that a shift function $\sigma_\mu(x)$ is invertible, only if the shift type $\mu$ has an odd number of elements (for example, $\mu=(a,b,l)()$).
For a shift function $\sigma_\mu(x)$ consisting only of $ROTR$ bit rotations, it is known that the shift function is invertible, if and only if the number of $ROTR$ operations is odd \cite{Rivest01012011,Thomsen2008}.
For a shift function $\sigma_\nu(x)$ that includes $SHR$ bit shifts, the situation is less clear: it appears that $\sigma_\nu(x)$ having an odd number of elements is a necessary, but not a sufficient condition to ensure invertibility.

For the purpose of finding the inverse, the following linearity property of $\sigma_\mu(x)$ is useful:

\begin{equation}
\sigma_\mu(x\oplus y) = \sigma_\mu(x)\oplus\sigma_\mu(y)
\end{equation}

This linearity property also holds for the inverse of $\sigma_\mu(x)$:

\begin{equation}
\sigma_\mu^{-1}(x\oplus y) = \sigma_\mu^{-1}(x)\oplus\sigma_\mu^{-1}(y)
\end{equation}

Now, given an arbitrary $w$-bit word $x=\sum_{j=0}^{w-1}\,x_j\,2^j$, we can write the inverse shift function as:

\begin{equation}
\sigma_\mu^{-1}(x) = \sigma_\mu^{-1}\left(\sum_{j=0}^{w-1}\,x_j\,2^j\right)
  = \bigoplus_{j=0}^{w-1}\,x_j\,\sigma_\mu^{-1}(2^j)
\end{equation}

Using the notation $\sigma_{\mu,i}^{-1}(y)$ to reference the individual bits $0\le i < w$ returned by $\sigma_\mu^{-1}$, we can define the inverse matrix $T_{ij}$, as follows:

\begin{equation}
T_{ij} = \sigma_{\mu,i}^{-1}(2^j)
\label{eq:inverseshiftmatrix}
\end{equation}

Which enables us to write the inverse shift function as:

\begin{equation}
\sigma_{\mu,i}^{-1}(x) = \bigoplus_{j=0}^{w-1}\,T_{ij}\,x_j
\end{equation}

Hence, to get the inverse function $\sigma_\mu^{-1}(x)$, we just need to solve for the matrix elements $T_{ij}$, defined by equation \ref{eq:inverseshiftmatrix}.
This problem can be solved using standard linear algebra techniques, such as Euclid's extended algorithm on input polynomials \cite{Rivest01012011}, or using the Z3 Prover from Microsoft Research \cite{OrsonPeters2021,Z3Prover}, which is the approach we adopted in our code examples.

Before we can calculate the reciprocal transform of the shift function, we need the following lemma.

\begin{lemma}
\label{lem:shiftlemma}
Given registers $x$ and $\kappa$, which are both of size $w$, a general shift function $\sigma_\mu(x)$, and using the definition of the dot product from equation \ref{eq:hadamarddotproduct}, the following identity holds:
\[ \kappa\cdot\sigma_\mu(x) = \sigma_{\tilde{\mu}}(\kappa)\cdot x \]
Where $\tilde{\mu}$ is the complement of the shift type $\mu$ (see equation \ref{eq:complementaryshifttype}).
\end{lemma}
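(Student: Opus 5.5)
The identity is a statement about transposes of $\mathbb{F}_2$-linear maps, so I would reduce it to the case of a single shift and then check the single-shift case by reindexing a finite sum. The dot product from equation \ref{eq:hadamarddotproduct} is $\kappa\cdot x=\frac{1}{2}\bigoplus_j \kappa_j x_j$, so it suffices to prove the bit-level parity identity $\bigoplus_j \kappa_j\,\sigma_{\mu,j}(x)=\bigoplus_i \sigma_{\tilde\mu,i}(\kappa)\,x_i$. The common factor $\frac12$ then comes along unchanged.

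\textbf{Reduction to one shift.} By the definition in equation \ref{eq:shiftoperationgeneraldefinition}, $\sigma_\mu(x)$ is a bitwise XOR of terms $ROTR^a(x)$ and $SHR^l(x)$. The parity $\bigoplus_j \kappa_j(\cdot)_j$ is additive over $\oplus$ in its second argument. So the left side splits into a XOR of pieces $\bigoplus_j\kappa_j ROTR^a_j(x)$ and $\bigoplus_j \kappa_j SHR^l_j(x)$. In the same way, $\sigma_{\tilde\mu}(\kappa)$ is the XOR of $ROTR^{-a}(\kappa)$ and $SHR^{-l}(\kappa)$, so the right side splits term by term. It therefore suffices to prove the two single-term identities
\[ \kappa\cdot ROTR^a(x)=ROTR^{-a}(\kappa)\cdot x,\qquad \kappa\cdot SHR^l(x)=SHR^{-l}(\kappa)\cdot x. \]

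\textbf{Rotation case.} Using equation \ref{eq:rotrdefinition}, the left side is $\bigoplus_j \kappa_j x_{j+a \bmod w}$. Substitute $i=j+a \bmod w$, which is a bijection of $\{0,\dots,w-1\}$. This gives $\bigoplus_i \kappa_{i-a \bmod w}\,x_i=\bigoplus_i ROTR^{-a}_i(\kappa)\,x_i$.

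\textbf{Shift case.} This is the step that needs the most care, because of the boundary factors. Using equation \ref{eq:shrdefinition}, the left side is $\bigoplus_j \kappa_j x_{j+l}\,\theta(w-1-(j+l))\,\theta(j+l)$, where $j$ also ranges over $0\le j\le w-1$ implicitly. Set $i=j+l$. The theta factors restrict to $0\le i\le w-1$, and the implicit range of $j$ becomes $0\le i-l\le w-1$. The right side, $SHR^{-l}_i(\kappa)=\kappa_{i-l}\,\theta(w-1-(i-l))\,\theta(i-l)$, carries exactly the constraint $0\le i-l\le w-1$, while the sum over $i$ supplies $0\le i\le w-1$. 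Both sides therefore sum $\kappa_{i-l}x_i$ over the same index set $\{i: 0\le i\le w-1,\ 0\le i-l\le w-1\}$, so they agree.

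One point to check here is whether the restriction $-w<l<w$, $l\neq0$ stays valid under negation. It does, since the interval is symmetric, so $SHR^{-l}$ is well defined. Recombining the single-term identities by XOR then gives $\kappa\cdot\sigma_\mu(x)=\sigma_{\tilde\mu}(\kappa)\cdot x$.
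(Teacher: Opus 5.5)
Your proof is correct and follows the paper's own argument: split the dot product over the XOR terms of $\sigma_\mu$, then reindex each $ROTR$ and $SHR$ term to move the shift from $x$ onto $\kappa$ with the amount negated. Your explicit check that the summation range and the Heaviside constraints trade places under $i=j+l$ is more careful than the paper, which makes that swap without comment.
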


\begin{proof}
From our definition of the Hadamard dot product, we have:
\begin{align*}
\kappa\cdot\sigma_\mu(x) &= \frac{1}{2}\,\bigoplus_{j=0}^{w-1}\,\kappa_j\,\sigma_{\mu,j}(x) \\
  &= \frac{1}{2}\,\bigoplus_{j=0}^{w-1}\,\kappa_j\,
  \left( ROTR^a_j(x)\oplus\cdots\oplus SHR^\ell_j(x)\oplus\cdots \right) \\
  &= \frac{1}{2}\,\bigoplus_{j=0}^{w-1}\,
  \kappa_j\,x_{j+a\textrm{ mod }w}\oplus\cdots\oplus\kappa_j\,x_{j+\ell}\,\theta(w-1-(j+\ell))\,\theta(j+\ell)\oplus\cdots \\
  &= \frac{1}{2}\,\bigoplus_{j=0}^{w-1}\,
  \kappa_{j-a\textrm{ mod }w}\,x_j\oplus\cdots\oplus\kappa_{j-\ell}\,x_j\,\theta(w-1-(j-\ell))\,\theta(j-\ell)\oplus\cdots \\
  &= \frac{1}{2}\,\bigoplus_{j=0}^{w-1}\,
  \left( ROTR^{-a}_j(\kappa)\oplus\cdots\oplus SHR^{-\ell}_j(\kappa)\oplus\cdots \right) x_j \\
  &= \sigma_{\tilde{\mu}}(\kappa)\cdot x
\end{align*}
\end{proof}

We are now ready to calculate the reciprocal transform of the shift function $\sigma_\mu(x)$.
Using the definition of the reciprocal transform (equation \ref{eq:reciprocaltransform}), we have:

\begin{align*}
R[\sigma_\mu(x)] &= \frac{1}{2^w}\,\sum_\kappa\sum_x\sum_k\,
  e^{i2\pi\left( -\kappa\cdot\sigma_\mu(x) + x\cdot k \right)}
  \,\ket{\kappa}\bra{k} \\
&= \frac{1}{2^w}\,\sum_\kappa\sum_x\sum_k\,
  e^{i2\pi\, x\cdot(k - \sigma_{\tilde{\mu}}(\kappa))}
  \,\ket{\kappa}\bra{k} \\
&= \sum_\kappa\sum_k\,
  \delta\left(k - \sigma_{\tilde{\mu}}(\kappa)\right)
  \,\ket{\kappa}\bra{k} \\
&= \sum_\kappa\sum_k\,
  \delta\left(\sigma^{-1}_{\tilde{\mu}}(k) - \kappa\right)
  \,\ket{\kappa}\bra{k} \\
\end{align*}

So now we can calculate the values of the $\kappa$ qubits from the $k$ qubits using the relation:

\begin{equation}
\kappa = \sigma^{-1}_{\tilde{\mu}}(k)
\end{equation}

Or in terms of the inverse matrix $T_{ij}$:

\begin{equation}
\kappa_i = \bigoplus_{j=0}^{w-1}\, T_{ij} k_j \>\>\>\>
  \>\>\>\> \textrm{ where } T_{ij} = \sigma^{-1}_{\tilde{\mu},i}(2^j)
\label{eq:recipbitshiftinversematrix}
\end{equation}

To see how this works in practice, consider how to construct the circuit for the reciprocal transform of the following bit shifter function that operates on a 4-bit word (that is, $w=4$):

\begin{equation}
\sigma_{(0,1)(3)}(x) = ROTR^0(x)\oplus ROTR^1(x)\oplus SHR^3(x)
\end{equation}

The complementary shift type is then $\tilde{\mu}=(0,-1)(-3)$ and, using a linear algebra package, we find that the inverse function $\sigma^{-1}_{(0,-1)(-3)}(x)$ is given by the following table of binary values (implicitly defining the inverse matrix $T_{ij}$):

\begin{align*}
\sigma^{-1}_{(0,-1)(-3)}(2^0) &= 0111 \\
\sigma^{-1}_{(0,-1)(-3)}(2^1) &= 1001 \\
\sigma^{-1}_{(0,-1)(-3)}(2^2) &= 1011 \\
\sigma^{-1}_{(0,-1)(-3)}(2^3) &= 1111 \\
\end{align*}

The reciprocal bit shifter circuit (see figure \ref{fig:recipbitshiftergate}) can now be constructed as follows:

\begin{itemize}
\item The circuit requires 8 qubits: 4 qubits to hold the incoming values, $k_0, k_1, k_2, k_3$, and another 4 ancillary qubits (initially set to $\ket{0}$) to temporarily hold the transformed values, $\kappa_0, \kappa_1, \kappa_2, \kappa_3$.

\item For every non-zero entry in the inverse matrix $T_{ij}$, we see from equation \ref{eq:recipbitshiftinversematrix} that we need to insert a CNOT gate, which is applied to ancillary qubit $\kappa_i$ and controlled by qubit $k_j$.

\item By calculating $\sigma_{(0,-1)(-3)}(\kappa)$ and XORing the result to the $k_j$ qubits, we uncompute the $k_j$ qubits to zero.

\item Swap the corresponding $k_j$ and ancillary qubits, so that the (originally) $k_j$ qubits now contain the $\kappa_j$ result and the ancillaries are reset to $\ket{0}$.
\end{itemize}

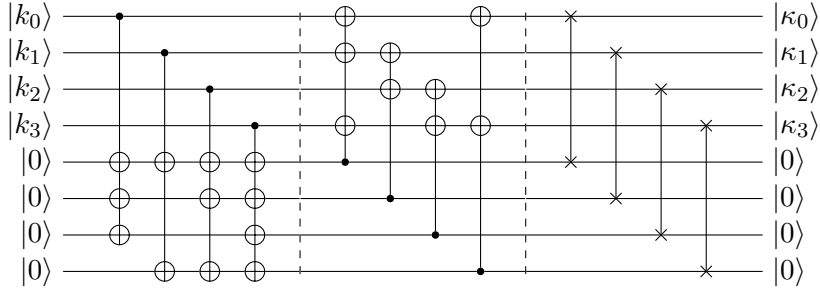
\begin{figure}[H]
\centering
\begin{tikzpicture}
    \begin{yquant}
        qubit {$\ket{\reg_{\idx}}$} k[4];
        qubit {$\ket{0}$} anc[4];
        hspace {4mm} -;

        cnot anc[0, 1, 2] | k[0];
        cnot anc[0, 3]    | k[1];
        cnot anc[0, 1, 3] | k[2];
        cnot anc[0, 1, 2, 3] | k[3];
        barrier (k, anc);
        cnot k[0, 1, 3] | anc[0];
        cnot k[1, 2] | anc[1];
        cnot k[2, 3] | anc[2];
        cnot k[3, 0] | anc[3];
        barrier (k, anc);
        swap (k[0], anc[0]);
        swap (k[1], anc[1]);
        swap (k[2], anc[2]);
        swap (k[3], anc[3]);

        hspace {4mm} -;
        output {$\ket{\kappa_{\idx}}$} k[0, 1, 2, 3];
        output {$\ket{0}$} anc[0, 1, 2, 3];
    \end{yquant}
\end{tikzpicture}
\caption{Reciprocal bit shifter gate for $\sigma_{(0,1)(3)}(x)$}\label{fig:recipbitshiftergate}
\end{figure}

\section{Simulated hash algorithm examples}

\subsection{Simulating a simple operation chain}
\label{simulatesimplechain}

From the chain rule theorem (section \ref{chainruletheorem}), we know that we can construct an oracle function by chaining multiple operations together.
For our first example, we consider a simple chain of two operations: modulo addition (adding two 4-bit integers), followed by a shift operation.
Given two 4-bit qubit registers $x$ and $y$, we therefore consider the function $g(x, y)$ mapping $(x, y)\mapsto(x^\prime,y^\prime)$ defined, as follows:

\begin{equation}
g(x, y) = \left(x, \,\sigma_{(0, 1, 3)()}(x + y)\right) = (x^\prime,y^\prime)
\end{equation}

Where $\sigma_{(0, 1, 3)()}$ is a shift function defined using the syntax from equation \ref{eq:shiftoperationgeneraldefinition} and the $+$ symbol is defined as addition modulo $2^4$.
Given the target values $x^\prime_t$ and $y^\prime_t$, we can then define the oracle function $f(x, y)$ as:

\begin{equation}
f(x, y) = \left(x\oplus x^\prime_t, \,\sigma_{(0, 1, 3)()}(x + y)\oplus y^\prime_t \right)
\end{equation}

Which ensures that $g(x,y)=(x^\prime_t,y^\prime_t)$ when $f(x,y)=0$.

This problem is implemented by the code shown in appendix \ref{appx:simpleoperationchainexample} (listing \ref{lst:simplechainexample}) \cite{QFrameExamples}, which uses the partial oracles algorithm (section \ref{partialoraclealgorithm}) to find the value(s) of $x$ and $y$ that satisfy the given constraints.
Consider the following run of this algorithm, which runs on the default Qrisp simulator backend:

\begin{itemize}
\item Registers $x$ and $y$ are initialized in a uniform superposition state: $\ket{\Psi} = \sum_x\sum_y\,\ket{x}\ket{y}$.
\item Target values are: $(x^\prime_t, y^\prime_t) = (4, 1)$.
\end{itemize}

After a single partial oracle iteration, the simulation returns the solution $(x_s, y_s)=(4, 7)$ satisfying $g(4, 7)=(4,1)$, with 100\% probability.
By contrast, an equivalent implementation of this problem using Grover's algorithm would require $\sqrt{2^8}=16$ iterations.

We can run this simulation with every possible combination of target values $(x^\prime_t, y^\prime_t)$ and verify that it gives the correct solution in each case.

\subsection{QFrame library}
\label{qframelibrary}

When we consider the code in listing \ref{lst:simplechainexample}, we see that for any oracle function, we need to implement three chunks of code in parallel:

\begin{itemize}
\item \textit{Calculation of the classical result of the target function}---given a particular function $g(x,y)$ that we are investigating, we need to be able to calculate $g(x, y)=(x_t, y_t)$ so that we can check the result of the search problem.

\item \textit{Implemention of the circuit for the oracle function}---based on the function $g(x,y)$ and given the target values $(x_t, y_t)$, we must implement the circuit for the oracle function $f(x, y)$.

\item \textit{Implemention of the circuit for the reciprocal oracle function}---we also need to implement the circuit for the reciprocal transform of the oracle function $R[f(x, y)]$.
\end{itemize}

These three chunks of code are all derived from the same function of interest $g(x,y)$, but writing the code chunks manually and separately is both tedious and error-prone, especially for functions that are long and complex.
It would be much more practical and reliable to specify the function algorithm once and then generate the code chunks directly from the specification.

For this purpose, we have implemented a small utility library, the QFrame library \cite{QFrame_GitHub,QFrame_PyPI} layered over Eclipse Qrisp, which supports Python syntax for defining the function of interest $g(\cdot)$.
The \lstinline{qframe.QFrameUInt} type provides an abstract representation of the qubit registers and QFrame provides a limited set of functions and operations for combining the \lstinline{QFrameUInt} instances.

The following operations and functions are provided by QFrame:

\begin{itemize}
\item $\pluseq$ operation---defines in-place addition modulo $2^w$, where $w$ is the bit-width of the \lstinline{QFrameUInt} arguments.
The following cases are supported:

\begin{itemize}
\item Given \lstinline{QFrameUInt} instances $x$ and $y$, the operation $x \pluseq y$ implements $P\,\ket{x}\ket{y} = \ket{x+y}\ket{y}$.
\item Given a \lstinline{QFrameUInt} instance $x$ and an integer constant $c$, the operation $x \pluseq c$ implements $\ket{x}\mapsto\ket{x+c}$.
\item Given a \lstinline{QFrameUInt} instance $x$ and a temporary function $t(y)$, such that $V\,\ket{y} = \ket{t(y)}$, the operation $x \mathrel{+}= t(y)$ implements $V^\dagger PV \ket{x}\ket{y} = \ket{x + t(y)}\ket{y}$.
\end{itemize}

\item \lstinline{qframe.maj(a, b, c)}---defines the majority function $Maj(a, b, c)$, where $a$, $b$, and $c$ are $w$-bit registers (see section \ref{majorityfunction}).
The \lstinline{maj(a, b, c)} function is a temporary function, in the sense that it does not permanently change the values of registers $a$, $b$, and $c$.
It is intended to be used on the right-hand side of a $\pluseq$ expression.

\item \lstinline{qframe.ch(a, b, c)}---defines the choose function $Ch(a, b, c)$, where $a$, $b$, and $c$ are $w$-bit registers (see section \ref{choosefunction}).
The \lstinline{ch(a, b, c)} function is a temporary function, in the sense that it does not permanently change the values of registers $a$, $b$, and $c$.
It is intended to be used on the right-hand side of a $\pluseq$ expression.

\item \lstinline{qframe.Rotr} class---can be used to define shift operator instances.
For example, \lstinline{r = qframe.Rotr(4, [0, 1], shr_list=[3])} defines a 4-bit wide shift operator of type $\mu=(0,1)(3)$ (see \ref{eq:shiftoperationgeneraldefinition}).
The \lstinline{r.shift_inline(x)} method can then be invoked to apply the shift function to register $x$.
If you need to add the result of a shift operation to another register in a $\pluseq$ expression, you should invoke \lstinline{r.shift(x)}, which acts as a temporary function (leaving register $x$ unchanged after the addition operation).

\end{itemize}

This utility provides all of the operations and functions you need to implement the SHA-256 algorithm.
After defining an algorithm with QFrame, the \lstinline{QFrameSession} object provides access to the following generated methods:

\begin{itemize}
\item \lstinline{QFrameSession.calculate(input_args)}---implements the classical function $g(\cdot)$, taking the given \lstinline{input_arg} values for the registers and returning the values of the registers after calculating $g(\cdot)$.

\item \lstinline{QFrameSession.apply_oracle_gate(target_dict)}---generates the circuit of the oracle function, where the relevant target values are specified by the \lstinline{target_dict} argument.

\item \lstinline{QFrameSession.apply_recip_oracle_gate()}---generates the circuit for the reciprocal oracle function, $R[f(\cdot)]$.

\end{itemize}

Instead of invoking \lstinline{apply_oracle_gate()} and \lstinline{apply_recip_oracle_gate()} directly, we also have the option of calling the \lstinline{QFrameSession.partial_oracle_iteration(target_dict)} method, which generates the complete code for the partial oracles algorithm (except for the initialization of the qubit registers).

\subsection{Toy hash algorithm}
\label{toyhashalgorithm}

With the computing resources currently available, it is neither possible to solve a full hash algorithm on a quantum computer nor is it possible to simulate it on a conventional computer.
To investigate the feasibility of solving hash algorithms on a quantum computer, therefore, it is necessary to work with a scaled-down algorithm, using fewer registers, smaller register bit-width, and fewer operations.
In this section, we present such a scaled-down (or toy) hash algorithm and we apply the partial oracles algorithm to investigate how effective it can be at inverting the hash algorithm.

Our goal here is not to define an algorithm that has any meaningful usefulness as a hash algorithm, but simply to define an algorithm that uses similar operations and has a similar sequence of steps to the SHA-256 algorithm.
The toy hash algorithm should also be small enough that it can run on machines with modest hardware requirements.
We define the toy hash algorithm, as follows:

\begin{itemize}
\item The message schedule is represented by a single 4-qubit register, $W_0$.
\item There are four working variables $a$, $b$, $c$, and $d$, each 4-qubits wide.
\item The addition operator $+$ is defined as addition modulo 16.
\item There are two shift operators (see equation \ref{eq:shiftoperationgeneraldefinition}), defined as follows:

\begin{align*}
\Sigma(x) &= \sigma_{(0, 1, 3)()}(x) \\
\sigma(x) &= \sigma_{(0, 1)(3)}(x) \\
\end{align*}

\item The constants $K_0$ to $K_{15}$ are got by taking the first sixteen SHA-256 constants from the SHS standard $K^{\{256\}}_0$ to $K^{\{256\}}_{15}$ and truncating them to the lowest 4 bits. 

\item For $t=0$ to $3$:

\begin{align*}
T_1 &= d + \Sigma(a) + Ch(a, b, c) + K_t + W_0 \\
T_2 &= Maj(a,\, b + T_1,\, c) \\
d &= c \\
c &= b + T_1 \\
b &= a \\
a &= T_1 + T_2 \\
W_0 &= \sigma(W_0) \\
\end{align*}

\end{itemize}

After applying these steps, the state of the registers changes from $(a,b,c,d,W_0)$ to $(a^\prime,b^\prime,c^\prime,d^\prime,W^\prime_0)$, which we write as $g(a,b,c,d,W_0)=(a^\prime,b^\prime,c^\prime,d^\prime,W^\prime_0)$.

This toy hash algorithm is implemented by the code from appendix \ref{appx:toyshaexample} (listing \ref{lst:toyhashexample}) \cite{QFrameExamples}, using the QFrame library to define the hash algorithm and the Qrisp client library to perform the simulation.
Consider the following run of this algorithm, which runs on the default Qrisp simulator backend:

\begin{itemize}
\item Registers $a$, $b$, $c$, $d$ and $W_0$ are initialized in a uniform superposition state: $\ket{\Psi} = \sum_a\sum_b\sum_c\sum_d\sum_{W_0}\,\ket{a}\ket{b}\ket{c}\ket{d}\ket{W_0}$.
\item Target values are: $(a^\prime_t, b^\prime_t, c^\prime_t, d^\prime_t, W^\prime_{0,t}) = (13, 1, 7, 4, 10)$.
\end{itemize}

After a single partial oracle iteration, the simulation returns the solution $(a_s, b_s, c_s, d_s, W_{0,s}) = (7, 5, 2, 10, 8)$, satisfying $g(7, 5, 2, 10, 8)=(13, 1, 7, 4, 10)$, with 100\% probability.
By contrast, an equivalent implementation of this problem using Grover's algorithm would require $\sqrt{2^{20}}=1024$ iterations.

Note that this problem is closely related to the problem of breaking a hash algorithm, but is different in important respects.
For the problem of breaking a hash algorithm, you would usually be given the input hash value (for example, the initial values of registers $a$, $b$, $c$, and $d$) and the calculated output hash value (for example, the target values of the registers $a^\prime,b^\prime,c^\prime,d^\prime$), and you would then use this information to recover part or all of the input message ($W_0$).
For this type of problem, you need to impose constraints on the output values \textit{and} on the input values of the hash calculation.
But the quantum search algorithm we have described so far only enables you to apply constraints on the output values of the problem.

\section{Conclusion}

In this work, we introduced an operator---the reciprocal transform of the oracle function $R[f(x)]$ (see equation \ref{eq:reciprocaltransform})---which effectively replaces the second Grover operator in a new search algorithm.
By combining the concept of partial oracles with the reciprocal transform operator, we proved the validity of the partial oracles algorithm (section \ref{partialoraclealgorithm}) for the case of a bijective oracle function.
This algorithm reduces the size of the search space by $1/2$ for each iteration, thereby returning the search result within $n = \log(N)$ iterations.
Furthermore, we observed that the iterations of the partial oracles algorithm can be executed in parallel (see section \ref{parallelization}) instead of sequentially, so that the $n$ search iterations can be compressed into a single partial oracle iteration.

We proved the chain rule for the reciprocal transform (equation \ref{eq:chainruletheorem} in section \ref{chainruletheorem}), which enables us to break down the transformation of complex oracle functions $f(x)$ into a sequence of manageable steps.
We also demonstrated that the target value of the search can effectively be ignored when calculating the reciprocal transform operator.

To demonstrate the effectiveness of the partial oracles algorithm in practice, we calculated the reciprocal transforms and associated circuits for the following elementary operations:

\begin{itemize}
\item $+$ ---addition modulo $2^w$, where $w$ is the bit-width of the summands.
\item $Ch(a, b, c)$ ---choose function (see section \ref{choosefunction}).
\item $Maj(a, b, c)$ ---majority function (see section \ref{majorityfunction}).
\item $\sigma_{(a,b,c,\ldots)(l,m,\ldots)}(x)$ ---bit shifting functions (see section \ref{bitshiftingfunctions}).
\end{itemize}

We combine these elementary operations to define hash functions, which can then be inverted using the partial oracles algorithm (but not in a way that could be used to reveal the input message, given the hash value).
In particular, we define a toy SHA algorithm using 20 qubits and we demonstrate that we can find the input value that corresponds to a specific target output value.
Effectively, out of 1,048,576 values in the initial quantum state, the partial oracles algorithm finds the unique solution by executing one partial oracle iteration.

We note that the version of the algorithm presented here is limited to searching with oracle functions constructed using only in-place operations.
The next stage of development of the partical oracles algorithm will be to remove this limitation by expanding our analysis to out-of-place operations.
This problem will be addressed in a forthcoming paper (part II).

\appendix

\section{Simple operation chain example}
\label{appx:simpleoperationchainexample}

The following code example implements a simple use case involving partial oracles search where the oracle function performs an integer addition followed by a bit shift operation.
The original source code for this example is available online \cite{QFrameExamples}.

\begin{lstlisting}[language=Python, caption=Simple operation chain example, label=lst:simplechainexample]
import qrisp
from qrisp import h

from qframe import recip_adder_gate, Rotr


def phase_oracle_init(p):
    qrisp.z(qrisp.h(p))

def apply_oracle_function(r: Rotr, x:qrisp.QuantumFloat, y:qrisp.QuantumFloat, anc_y:qrisp.QuantumFloat, target: tuple):
    y += x
    r.rotr_gate(y, anc_y)

    # Set target value(s)
    x_target, y_target = target
    for i in range(x_target.bit_length()):
        if (x_target & 1<<i): qrisp.x(x[i])
    for i in range(y_target.bit_length()):
        if (y_target & 1<<i): qrisp.x(y[i])

def apply_recip_oracle_function(r: Rotr, x:qrisp.QuantumFloat, y:qrisp.QuantumFloat, anc_y:qrisp.QuantumFloat, c:qrisp.Qubit, anc:qrisp.Qubit):
    # Initialize reciprocal carry bit and phase oracle bit
    h(c)
    phase_oracle_init(anc)
    # Apply reciprocal circuit
    recip_adder_gate(x, y, c, anc)
    r.recip_rotr_gate(y, anc_y)


def run_simulation(x_seed, y_seed):
    # Example parameters
    width = 4
    r = Rotr(width, [0, 1, 3])

    # Calculate target tuple from seed values: x_seed and y_seed
    target  = (x_seed, r.rotr_function((x_seed + y_seed) % 2**width))
    print(f'\n----------------------------------------------------------')
    print(f'calculate(x_seed = {x_seed}, y_seed = {y_seed}) = {target}')

    # Initialize variables
    x = qrisp.QuantumFloat(width, name='x')
    y = qrisp.QuantumFloat(width, name='y')
    carry = qrisp.QuantumVariable(1, name='c')
    anc   = qrisp.QuantumVariable(1, name='anc')
    anc_y = qrisp.QuantumFloat(width, name='anc_y')

    # Prepare the quantum state in an equal-weighted superposition (Walsh-Hadamard transform)
    h(x)
    h(y)

    qrisp.barrier(x[:] + y[:] + carry[0] + anc[0])

    # Single partial oracle iteration
    with qrisp.conjugate(apply_oracle_function)(r, x, y, anc_y, target):
        qrisp.barrier(x[:] + y[:] + carry[0] + anc[0])
        qrisp.s(x)
        qrisp.s(y)
        qrisp.barrier(x[:] + y[:] + carry[0] + anc[0])
    qrisp.barrier(x[:] + y[:] + carry[0] + anc[0])
    h(x)
    h(y)
    qrisp.barrier(x[:] + y[:] + carry[0] + anc[0])
    with qrisp.conjugate(apply_recip_oracle_function)(r, x, y, anc_y, carry[0], anc[0]):
        qrisp.barrier(x[:] + y[:] + carry[0] + anc[0])
        qrisp.s(x)
        qrisp.s(y)
        qrisp.barrier(x[:] + y[:] + carry[0] + anc[0])
    h(y)
    h(x)

    # Show the circuit
    print(y.qs)

    # Show result
    result_dict = qrisp.multi_measurement([x, y])
    print(f'\nResult: {result_dict}')

    for result_tuple in result_dict:
        assert (x_seed, y_seed) == result_tuple

# Run the simulation with seed values
run_simulation(4, 7)

# Test all possible seed values
# for x_seed in range(16):
#     for y_seed in range(16):
#         run_simulation(x_seed, y_seed)
\end{lstlisting}

\section{Toy SHA example}
\label{appx:toyshaexample}

The following code example implements a use case involving partial oracles search where the oracle function performs a simplified "toy" hash operation on a 4-bit input message.
The original source code for this example is available online \cite{QFrameExamples}.

\begin{lstlisting}[language=Python, caption=Toy hash algorithm example, label=lst:toyhashexample]
import qrisp
import qrisp.jasp
from qrisp import h

import qframe
from qframe.core.qframe_uint import QFrameUInt
from qframe import Rotr
import sys


# SHA-256 constants
K = [
    0x428a2f98, 0x71374491, 0xb5c0fbcf, 0xe9b5dba5, 0x3956c25b, 0x59f111f1, 0x923f82a4, 0xab1c5ed5,
    0xd807aa98, 0x12835b01, 0x243185be, 0x550c7dc3, 0x72be5d74, 0x80deb1fe, 0x9bdc06a7, 0xc19bf174,
    0xe49b69c1, 0xefbe4786, 0x0fc19dc6, 0x240ca1cc, 0x2de92c6f, 0x4a7484aa, 0x5cb0a9dc, 0x76f988da,
    0x983e5152, 0xa831c66d, 0xb00327c8, 0xbf597fc7, 0xc6e00bf3, 0xd5a79147, 0x06ca6351, 0x14292967,
    0x27b70a85, 0x2e1b2138, 0x4d2c6dfc, 0x53380d13, 0x650a7354, 0x766a0abb, 0x81c2c92e, 0x92722c85,
    0xa2bfe8a1, 0xa81a664b, 0xc24b8b70, 0xc76c51a3, 0xd192e819, 0xd6990624, 0xf40e3585, 0x106aa070,
    0x19a4c116, 0x1e376c08, 0x2748774c, 0x34b0bcb5, 0x391c0cb3, 0x4ed8aa4a, 0x5b9cca4f, 0x682e6ff3,
    0x748f82ee, 0x78a5636f, 0x84c87814, 0x8cc70208, 0x90befffa, 0xa4506ceb, 0xbef9a3f7, 0xc67178f2
]
# Initial hash value
H0 = [0x6a09e667, 0xbb67ae85, 0x3c6ef372, 0xa54ff53a, 0x510e527f, 0x9b05688c, 0x1f83d9ab, 0x5be0cd19]


width = 4
width_mask = (0b1 << width) - 1
a = QFrameUInt(width, name='a')
b = QFrameUInt(width, name='b')
c = QFrameUInt(width, name='c')
d = QFrameUInt(width, name='d')
W0 = QFrameUInt(width, name='W0')

big_sigma = Rotr(width, [0, 1, 3])
sigma     = Rotr(width, [0, 1], shr_list=[3])

# Define algorithm using QFrame
abcd = [a, b, c, d]  # Working variables

a_idx = 0
b_idx = 1
c_idx = 2
d_idx = 3
for t in range(4):
    # Update the working variables
    abcd[d_idx] += big_sigma.shift(abcd[a_idx])
    abcd[d_idx] += qframe.ch(abcd[a_idx], abcd[b_idx], abcd[c_idx])
    abcd[d_idx] += K[t] & width_mask
    abcd[d_idx] += W0
    abcd[b_idx] += abcd[d_idx]   # Also make an update to b
    abcd[d_idx] += qframe.maj(abcd[a_idx], abcd[b_idx], abcd[c_idx])
    # Update the message schedule variable W0
    sigma.shift_inline(W0)
    # Finally, update indices for the next round
    a_idx = (a_idx - 1) % 4
    b_idx = (b_idx - 1) % 4
    c_idx = (c_idx - 1) % 4
    d_idx = (d_idx - 1) % 4


# Get the QFrameSession object
qfs = a.qfs

# Prepare the target state
a_seed = H0[0] & width_mask
b_seed = H0[1] & width_mask
c_seed = H0[2] & width_mask
d_seed = H0[3] & width_mask
W0_seed = ord('H') & width_mask   # = 8 (for a 4-bit register)

seed_args = {a: a_seed, b: b_seed, c: c_seed, d: d_seed, W0: W0_seed}
target = qfs.calculate(seed_args, raw_result=True)
print(f'\ncalculate(a: {a_seed}, b: {b_seed}, c: {c_seed}, d: {d_seed}, W0: {W0_seed}) = {qfs.calculate(seed_args)}\n')

# Prepare the quantum state in an equal-weighted superposition (Walsh-Hadamard transform)
h(a.qv)
h(b.qv)
h(c.qv)
h(d.qv)
h(W0.qv)

qfs.partial_oracle_iteration(target)

# Show the circuit
print(a.qv.qs)

# Show result
result_dict = qrisp.multi_measurement([a.qv, b.qv, c.qv, d.qv, W0.qv])

print(f'\nResult: {result_dict}')
print(f'Calculated: calculate(a: {a_seed}, b: {b_seed}, c: {c_seed}, d: {d_seed}, W0: {W0_seed}) = {qfs.calculate(seed_args)}\n')
\end{lstlisting}

\bibliography{qframe}

\begin{thebibliography}{17}%
\makeatletter
\providecommand \@ifxundefined [1]{%
 \@ifx{#1\undefined}
}%
\providecommand \@ifnum [1]{%
 \ifnum #1\expandafter \@firstoftwo
 \else \expandafter \@secondoftwo
 \fi
}%
\providecommand \@ifx [1]{%
 \ifx #1\expandafter \@firstoftwo
 \else \expandafter \@secondoftwo
 \fi
}%
\providecommand \natexlab [1]{#1}%
\providecommand \enquote  [1]{``#1''}%
\providecommand \bibnamefont  [1]{#1}%
\providecommand \bibfnamefont [1]{#1}%
\providecommand \citenamefont [1]{#1}%
\providecommand \href@noop [0]{\@secondoftwo}%
\providecommand \href [0]{\begingroup \@sanitize@url \@href}%
\providecommand \@href[1]{\@@startlink{#1}\@@href}%
\providecommand \@@href[1]{\endgroup#1\@@endlink}%
\providecommand \@sanitize@url [0]{\catcode `\\12\catcode `\$12\catcode
  `\&12\catcode `\#12\catcode `\^12\catcode `\_12\catcode `\%12\relax}%
\providecommand \@@startlink[1]{}%
\providecommand \@@endlink[0]{}%
\providecommand \url  [0]{\begingroup\@sanitize@url \@url }%
\providecommand \@url [1]{\endgroup\@href {#1}{\urlprefix }}%
\providecommand \urlprefix  [0]{URL }%
\providecommand \Eprint [0]{\href }%
\providecommand \doibase [0]{https://doi.org/}%
\providecommand \selectlanguage [0]{\@gobble}%
\providecommand \bibinfo  [0]{\@secondoftwo}%
\providecommand \bibfield  [0]{\@secondoftwo}%
\providecommand \translation [1]{[#1]}%
\providecommand \BibitemOpen [0]{}%
\providecommand \bibitemStop [0]{}%
\providecommand \bibitemNoStop [0]{.\EOS\space}%
\providecommand \EOS [0]{\spacefactor3000\relax}%
\providecommand \BibitemShut  [1]{\csname bibitem#1\endcsname}%
\let\auto@bib@innerbib\@empty
\bibitem [{\citenamefont {Hoefler}\ \emph {et~al.}(2023)\citenamefont
  {Hoefler}, \citenamefont {Haener},\ and\ \citenamefont
  {Troyer}}]{hoefler2023disentanglinghypepracticalityrealistically}%
  \BibitemOpen
  \bibfield  {author} {\bibinfo {author} {\bibfnamefont {T.}~\bibnamefont
  {Hoefler}}, \bibinfo {author} {\bibfnamefont {T.}~\bibnamefont {Haener}},\
  and\ \bibinfo {author} {\bibfnamefont {M.}~\bibnamefont {Troyer}},\ }\href
  {https://arxiv.org/abs/2307.00523} {\bibinfo {title} {Disentangling hype from
  practicality: On realistically achieving quantum advantage}} (\bibinfo {year}
  {2023}),\ \Eprint {https://arxiv.org/abs/2307.00523} {arXiv:2307.00523
  [quant-ph]} \BibitemShut {NoStop}%
\bibitem [{\citenamefont {Stoudenmire}\ and\ \citenamefont
  {Waintal}(2023)}]{Stoudenmire2023GroversAO}%
  \BibitemOpen
  \bibfield  {author} {\bibinfo {author} {\bibfnamefont {E.~M.}\ \bibnamefont
  {Stoudenmire}}\ and\ \bibinfo {author} {\bibfnamefont {X.}~\bibnamefont
  {Waintal}},\ }\bibfield  {title} {\bibinfo {title} {Grover's algorithm offers
  no quantum advantage},\ }\href {https://doi.org/10.48550/arXiv.2303.11317}
  {\bibfield  {journal} {\bibinfo  {journal} {arXiv preprint arXiv.2303.11317}\
  } (\bibinfo {year} {2023})}\BibitemShut {NoStop}%
\bibitem [{\citenamefont {Stoudenmire}\ and\ \citenamefont
  {Waintal}(2024)}]{PhysRevX.14.041029}%
  \BibitemOpen
  \bibfield  {author} {\bibinfo {author} {\bibfnamefont {E.~M.}\ \bibnamefont
  {Stoudenmire}}\ and\ \bibinfo {author} {\bibfnamefont {X.}~\bibnamefont
  {Waintal}},\ }\bibfield  {title} {\bibinfo {title} {Opening the black box
  inside grover's algorithm},\ }\href
  {https://doi.org/10.1103/PhysRevX.14.041029} {\bibfield  {journal} {\bibinfo
  {journal} {Phys. Rev. X}\ }\textbf {\bibinfo {volume} {14}},\ \bibinfo
  {pages} {041029} (\bibinfo {year} {2024})}\BibitemShut {NoStop}%
\bibitem [{\citenamefont {Bolton}(2024)}]{PartialOraclesArticle2024}%
  \BibitemOpen
  \bibfield  {author} {\bibinfo {author} {\bibfnamefont {F.~M.}\ \bibnamefont
  {Bolton}},\ }\href {https://arxiv.org/abs/2403.13035} {\bibinfo {title}
  {Accelerated quantum search using partial oracles and grover's algorithm}}
  (\bibinfo {year} {2024}),\ \Eprint {https://arxiv.org/abs/2403.13035}
  {arXiv:2403.13035 [quant-ph]} \BibitemShut {NoStop}%
\bibitem [{\citenamefont {Galindo}\ and\ \citenamefont
  {Martin-Delgado}(2000)}]{Galindo2000FamilyOG}%
  \BibitemOpen
  \bibfield  {author} {\bibinfo {author} {\bibfnamefont {A.}~\bibnamefont
  {Galindo}}\ and\ \bibinfo {author} {\bibfnamefont {M.~A.}\ \bibnamefont
  {Martin-Delgado}},\ }\bibfield  {title} {\bibinfo {title} {Family of
  grover’s quantum-searching algorithms},\ }\href
  {https://api.semanticscholar.org/CorpusID:8335488} {\bibfield  {journal}
  {\bibinfo  {journal} {Physical Review A}\ }\textbf {\bibinfo {volume} {62}},\
  \bibinfo {pages} {062303} (\bibinfo {year} {2000})}\BibitemShut {NoStop}%
\bibitem [{\citenamefont {Grover}(1996)}]{LKGrover1996}%
  \BibitemOpen
  \bibfield  {author} {\bibinfo {author} {\bibfnamefont {L.~K.}\ \bibnamefont
  {Grover}},\ }\bibfield  {title} {\bibinfo {title} {A fast quantum mechanical
  algorithm for database search},\ }in\ \href
  {https://doi.org/10.1145/237814.237866} {\emph {\bibinfo {booktitle}
  {Proceedings of the Twenty-Eighth Annual ACM Symposium on Theory of
  Computing}}},\ \bibinfo {series and number} {STOC '96}\ (\bibinfo
  {publisher} {Association for Computing Machinery},\ \bibinfo {address} {New
  York, NY, USA},\ \bibinfo {year} {1996})\ p.\ \bibinfo {pages}
  {212–219}\BibitemShut {NoStop}%
\bibitem [{\citenamefont {Grover}(1997)}]{PhysRevLett.79.325}%
  \BibitemOpen
  \bibfield  {author} {\bibinfo {author} {\bibfnamefont {L.~K.}\ \bibnamefont
  {Grover}},\ }\bibfield  {title} {\bibinfo {title} {Quantum mechanics helps in
  searching for a needle in a haystack},\ }\href
  {https://doi.org/10.1103/PhysRevLett.79.325} {\bibfield  {journal} {\bibinfo
  {journal} {Phys. Rev. Lett.}\ }\textbf {\bibinfo {volume} {79}},\ \bibinfo
  {pages} {325} (\bibinfo {year} {1997})}\BibitemShut {NoStop}%
\bibitem [{\citenamefont {Long}(2001)}]{Long2001GroverAW}%
  \BibitemOpen
  \bibfield  {author} {\bibinfo {author} {\bibfnamefont {G.~L.}\ \bibnamefont
  {Long}},\ }\bibfield  {title} {\bibinfo {title} {Grover algorithm with zero
  theoretical failure rate},\ }\href
  {https://api.semanticscholar.org/CorpusID:5742433} {\bibfield  {journal}
  {\bibinfo  {journal} {Physical Review A}\ }\textbf {\bibinfo {volume} {64}},\
  \bibinfo {pages} {022307} (\bibinfo {year} {2001})}\BibitemShut {NoStop}%
\bibitem [{\citenamefont {{National Institute of Standards and
  Technology}}(2015)}]{SHS_FIPS_180_4}%
  \BibitemOpen
  \bibfield  {author} {\bibinfo {author} {\bibnamefont {{National Institute of
  Standards and Technology}}},\ }\href
  {https://csrc.nist.gov/pubs/fips/180-4/upd1/final} {\bibinfo {title} {{Secure
  Hash Standard}}} (\bibinfo {year} {2015})\BibitemShut {NoStop}%
\bibitem [{\citenamefont {Cuccaro}\ \emph {et~al.}(2004)\citenamefont
  {Cuccaro}, \citenamefont {Draper}, \citenamefont {Kutin},\ and\ \citenamefont
  {Moulton}}]{cuccaro2004newquantumripplecarryaddition}%
  \BibitemOpen
  \bibfield  {author} {\bibinfo {author} {\bibfnamefont {S.~A.}\ \bibnamefont
  {Cuccaro}}, \bibinfo {author} {\bibfnamefont {T.~G.}\ \bibnamefont {Draper}},
  \bibinfo {author} {\bibfnamefont {S.~A.}\ \bibnamefont {Kutin}},\ and\
  \bibinfo {author} {\bibfnamefont {D.~P.}\ \bibnamefont {Moulton}},\ }\href
  {https://arxiv.org/abs/quant-ph/0410184} {\bibinfo {title} {A new quantum
  ripple-carry addition circuit}} (\bibinfo {year} {2004}),\ \Eprint
  {https://arxiv.org/abs/quant-ph/0410184} {arXiv:quant-ph/0410184 [quant-ph]}
  \BibitemShut {NoStop}%
\bibitem [{\citenamefont {Rivest}(2011)}]{Rivest01012011}%
  \BibitemOpen
  \bibfield  {author} {\bibinfo {author} {\bibfnamefont {R.~L.}\ \bibnamefont
  {Rivest}},\ }\bibfield  {title} {\bibinfo {title} {{The invertibility of the
  XOR of rotations of a binary word}},\ }\href
  {https://doi.org/10.1080/00207161003596708} {\bibfield  {journal} {\bibinfo
  {journal} {International Journal of Computer Mathematics}\ }\textbf {\bibinfo
  {volume} {88}},\ \bibinfo {pages} {281} (\bibinfo {year} {2011})},\ \Eprint
  {https://arxiv.org/abs/https://doi.org/10.1080/00207161003596708}
  {https://doi.org/10.1080/00207161003596708} \BibitemShut {NoStop}%
\bibitem [{\citenamefont {Thomsen}(2008)}]{Thomsen2008}%
  \BibitemOpen
  \bibfield  {author} {\bibinfo {author} {\bibfnamefont {S.~S.}\ \bibnamefont
  {Thomsen}},\ }\emph {\bibinfo {title} {Cryptographic Hash Functions}},\
  \href@noop {} {\bibinfo {type} {Phd thesis}},\ \bibinfo  {school} {Technical
  University of Denmark} (\bibinfo {year} {2008})\BibitemShut {NoStop}%
\bibitem [{\citenamefont {Peters}(2021)}]{OrsonPeters2021}%
  \BibitemOpen
  \bibfield  {author} {\bibinfo {author} {\bibfnamefont {O.}~\bibnamefont
  {Peters}},\ }\href {https://stackoverflow.com/questions/66607696} {\bibinfo
  {title} {{Reverse SHA-256 sigma0 function within complexity of O(n)?}}}
  (\bibinfo {year} {2021})\BibitemShut {NoStop}%
\bibitem [{\citenamefont {{Microsoft Research}}(2026)}]{Z3Prover}%
  \BibitemOpen
  \bibfield  {author} {\bibinfo {author} {\bibnamefont {{Microsoft
  Research}}},\ }\href@noop {} {\bibinfo {title} {Z3 prover}},\ \bibinfo
  {howpublished} {\url{https://github.com/Z3Prover/z3}} (\bibinfo {year}
  {2026})\BibitemShut {NoStop}%
\bibitem [{\citenamefont {Bolton}(2026{\natexlab{a}})}]{QFrameExamples}%
  \BibitemOpen
  \bibfield  {author} {\bibinfo {author} {\bibfnamefont {F.}~\bibnamefont
  {Bolton}},\ }\href@noop {} {\bibinfo {title} {{QFrame examples}}},\ \bibinfo
  {howpublished} {\url{https://github.com/bradan-quantum/qframe-examples/}}
  (\bibinfo {year} {2026}{\natexlab{a}})\BibitemShut {NoStop}%
\bibitem [{\citenamefont {Bolton}(2026{\natexlab{b}})}]{QFrame_GitHub}%
  \BibitemOpen
  \bibfield  {author} {\bibinfo {author} {\bibfnamefont {F.}~\bibnamefont
  {Bolton}},\ }\href@noop {} {\bibinfo {title} {{QFrame source code}}},\
  \bibinfo {howpublished} {\url{https://github.com/bradan-quantum/qframe/}}
  (\bibinfo {year} {2026}{\natexlab{b}})\BibitemShut {NoStop}%
\bibitem [{\citenamefont {Bolton}(2026{\natexlab{c}})}]{QFrame_PyPI}%
  \BibitemOpen
  \bibfield  {author} {\bibinfo {author} {\bibfnamefont {F.}~\bibnamefont
  {Bolton}},\ }\href@noop {} {\bibinfo {title} {{QFrame package}}},\ \bibinfo
  {howpublished} {\url{https://pypi.org/project/qframe/}} (\bibinfo {year}
  {2026}{\natexlab{c}})\BibitemShut {NoStop}%
\end{thebibliography}%

\end{document}